\newcommand{\ul}[1]{{\underline{#1}}}
\def\be{\begin{equation}}
\def\ee{\end{equation}}
\def\ba{\begin{array}}
\def\ea{\end{array}}
\def\dps{\displaystyle}
\def\sd{S^\dagger}
\def\bsd{\bar S}
\def\BGST{Barnich:2004cr}
\def\BGadS{Barnich:2006pc}
\def\BGL{Batalin:2001je}
\def\GL{Grigoriev:2000rn}
\def\AGT{Alkalaev:2008gi}
\renewcommand{\tilde}{\widetilde}
\renewcommand{\hat}{\widehat}
\newtheorem{prop}{Proposition}[section]
\newtheorem{lemma}[prop]{Lemma}
\newtheorem{definition}[prop]{Definition}
\newcommand{\bref}[1]{\textbf{\ref{#1}}}
\newcommand{\im}{\mathop{\mathrm{Im}}}
\newcommand{\gh}[1]{\mathrm{gh}(#1)}
\newcommand{\module}[1]{\mathscr{#1}}
\newcommand{\modM}{\module{M}} 
\newcommand{\smalln}{n}
\newcommand{\smallq}{q}
\newcommand{\smalls}{s}
\newcommand{\dd}{\partial}
\renewcommand{\d}{\partial}
\renewcommand{\geq}{\,{\geqslant}\,}
\renewcommand{\leq}{\,{\leqslant}\,}
\newcommand{\binner}[2]{%
  {\langle}\kern-4.15pt{\langle}#1{,}\,#2{\rangle}\kern-4.15pt{\rangle}}
\newcommand{\commut}[2]{[#1{,}\,#2]}
\newcommand{\half}{\mathchoice{%
    \ffrac{1}{2}}{\frac{1}{2}}{\frac{1}{2}}{\frac{1}{2}}}
\newcommand{\ffrac}[2]{\raisebox{.5pt}%
  {\footnotesize$\displaystyle\frac{#1}{#2}$}\kern1pt}
\newcommand{\brst}{\mathsf{\Omega}}
\newcommand{\dl}[1]{\mathchoice{\ffrac{\dd}{\dd #1}}{\frac{\dd}{\dd
      #1}}{\ffrac{\dd}{\dd #1}}{\ffrac{\dd}{\dd #1}}}
\newcommand{\ddl}[2]{\ffrac{\dd #1}{\dd #2}}
\newcommand{\fR}{\mathbb{R}}
\newcommand{\derham}{\boldsymbol{d}}
\newcommand{\manifold}[1]{\mathscr{#1}}
\newcommand{\manX}{\manifold{X}}
\def\cI{\mathcal{I}}
\def\cJ{\mathcal{J}}
\def\cK{\mathcal{K}}
\def\cN{\mathcal{N}}
\def\cP{\mathcal{P}}
\def\cS{\mathcal{S}}
\def\cT{\mathcal{T}}
\def\cV{\mathcal{V}}
\numberwithin{equation}{section} \makeatletter
\begin{document}

\begin{flushright}
\end{flushright}

~~
\vspace{0.5cm}
\begin{center}

{\Large\textbf{
Unified BRST description
of AdS gauge fields
}}

\vspace{.9cm}

{\large Konstantin Alkalaev and  Maxim Grigoriev}

\vspace{0.5cm}

\textit{I.E. Tamm Department of Theoretical Physics, \\P.N. Lebedev Physical
Institute,\\ Leninsky ave. 53, 119991 Moscow, Russia}

\vspace{1.5cm}

\begin{abstract}
A concise formulation for mixed-symmetry gauge fields
on AdS space is proposed. It is explicitly local, gauge invariant, and has manifest
AdS symmetry. Various other known formulations (including the original formulation of Metsaev and the unfolded formulation) can be derived  through the appropriate reductions and gauge fixing. As a byproduct, we also identify some new useful formulations of the theory that can be interesting for further developments. The formulation is presented in the BRST terms and  extensively uses Howe duality. In particular, the BRST operator
is a sum of the term associated to the spacetime isometry algebra
and the term associated to the Howe dual symplectic algebra.
\end{abstract}

\end{center}

\newpage
{
\small
\tableofcontents
}
\section{Introduction}

There have been numerous approaches to mixed-symmetry higher spin
gauge fields on the AdS space. In contrast to the totally
symmetric case where a simple Lagrangian formulation is
available~\cite{Fronsdal:1979vb}, describing mixed-symmetry AdS
fields is not so straightforward. In particular, general AdS gauge
fields have been described  much later~\cite{Metsaev:1995re} and
only at the level of equations of motion. Moreover, these
equations are not truly gauge-invariant as the gauge parameters
satisfy differential constraints. The true gauge fields were then
identified in \cite{Alkalaev:2003qv,Skvortsov:2009zu} within the
unfolded approach~\cite{Shaynkman:2000ts,Vasiliev:2001wa}. The
unfolded formulation of AdS gauge fields was recently proposed in
\cite{Boulanger:2008up,Boulanger:2008kw}. However, beyond the
totally symmetric field case \cite{Vasiliev:2001wa} this
formulation happens to be rather involved technically  because the
constraints imposed on the fields bring the respective projectors
to the equations of motion. As far as particular cases of
mixed-symmetry AdS fields are concerned there are other successful
approaches available in the literature \cite{Brink:2000ag,
Buchbinder:2001bs,Buchbinder:2006ge,Buchbinder:2007vq,Fotopoulos:2008ka,
Buchbinder:2008kw, Fotopoulos:2006ci, Zinoviev:2008ve,
Zinoviev:2009gh,
Reshetnyak:2008sf,Hallowell:2005np,Bastianelli:2009eh,Barnich:2005bn}.
Light-cone formulation for mixed-symmetry fields of any spins was
elaborated in \cite{Metsaev:1999ui, Metsaev:2000ja,
Metsaev:2004ee}.

Although all these formulations are believed (and partially
proved) to describe the same physical degrees of freedom their
explicit interrelations remain unclear. Moreover, further
developments and especially a search for mixed-symmetry fields
consistent interactions call for a simple and algebraically
transparent formulation that is free of the above difficulties.
This paper is devoted to constructing a candidate formulation that
meets these criteria. This turns out to be a natural
generalization of the recent~\cite{\AGT} (see also~\cite{\BGST}
for the case of Fronsdal fields) formulation for Minkowski space
mixed-symmetry fields. At the same time, it naturally generalizes
the formulation~\cite{\BGadS} of totally symmetric AdS fields to
the mixed-symmetry case. In particular, the equations of motion
and gauge symmetries has one and the same structure for massless
fields of arbitrary symmetry type in both Minkowski and AdS
spaces.

An important technical ingredient used throughout the paper is the
twisted version of the Howe dual \cite{Howe1} realisation of
symplectic and orthogonal algebras (in the case of Fronsdal
fields, \textit{i.e.} for $sp(4)$ algebra, this realization was
first used in~\cite{\BGadS}). Though equivalent to the usual one
in the space of polynomials it turns out inequivalent in the space
of formal power series because the equivalence transformation is
not well-defined in this space. In the same way as usual Howe
duality is useful in describing finite-dimensional irreducible
modules (\textit{e.g.} irreducibility conditions for one algebra
are highest weight conditions for its Howe dual) the twisted
realization also describes infinite-dimensional indecomposable
representations. This is crucial because both type of modules are
necessary to describe gauge fields. Namely, the generalized
curvatures take values in the indecomposable module (known as Weyl
module) while the generalized gauge potentials in the irreducible
modules (known as gauge modules) of the AdS isometry
algebra~\cite{Bekaert:2005vh}. The twisted Howe duality allows to
embed both type of modules in one and the same $o(d-1,2)-sp(2n)$
bimodule.

An attractive feature of the proposed construction is that the
irreducibility constraints commute with the equations of motion.
Strictly speaking, they are BRST invariant with respect to the
BRST operator defining the equations of motion and gauge
symmetries. This allows to simultaneously describe a collection of
irreducible fields such that an individual field can be then
singled out by the appropriate  constraints. This feature is
important from the string theory perspective, where the string
spectrum contains a huge collection of mixed-symmetry fields.
Although string theory leads to massive mixed-symmetry fields and
is not well-defined on AdS space, in the appropriate limit it is
expected to incorporate massless fields and to admit AdS
background (see e.g.~\cite{Bonelli:2003zu,Lindstrom:2003mg}).
Motivated by this relationship we also propose other equivalent
reformulations of the AdS mixed- symmetry fields including that
defined in terms of the ambient space and  based on the BRST
operator,\footnote{It is also similar to the formulation of \cite{
Fotopoulos:2006ci} for totally symmetric
 fields.} analogous to the standard one associated to the bosonic string .

\section{Preliminaries}
\label{sec:Preliminaries}

\subsection{Howe dual realizations}
\label{sec:howe}
In this section we introduce main technical tools of our construction that make
the whole consideration manifestly $o(d-1,2)$ covariant.

The anti-de Sitter spacetime AdS  can be described as a hyperboloid  $\manX$ embedded in the ambient flat pseudo-Euclidean space $\fR^{d+1}$. Labelling the coordinates in  $\fR^{d+1}$
as $X^A$, $A = 0,..., d$, the embedding equation is
\be
\label{hyper}
\eta_{AB}X^A X^B + 1 = 0\;,
\qquad \eta_{AB} = (- + \cdots + -)\;.
\ee
Infinitesimal   isometries
of the hyperboloid form a pseudo-orthogonal algebra $o(d-1,2)$.

Let $A^A_I$, where $A=0, ..., d$ and $I=0, ...., n-1$ be commuting variables transforming as
vectors of $o(d-1,2)$. The realization of $o(d-1,2)$ on the space  of
functions in $A^A_I$ reads
\be
J^{AB} =A^A_I\dl{A_B{}_I}-A^B_I\dl{A_A{}_I}\;.
\ee
The realization of $sp(2n)$ reads
\begin{equation}
\label{SPgenerators}
 T_{IJ}=A_I^A A_{JA}\,,
 \qquad
 T_I{}^J=\frac{1}{2}\,\{A^A_I, \dl{A_J^A}\}\,,
 \qquad
 T^{IJ}=\dl{A_I^A}\dl{A_{JA}}\,.
\end{equation}
These two algebras form a Howe dual pair $o(d-1,2)-sp(2n)$ \cite{Howe1}. The diagonal elements $T_I{}^I$ form a
basis in the Cartan subalgebra while $T^{IJ}$ and $T_I{}^J,\; I<J$ are the basis
elements of the appropriately chosen upper-triangular subalgebra. Let us note that $gl(n)$ algebra is realized by the generators $T_I{}^J$ as a subalgebra of $sp(2n)$ while
its $sl(n)$ subalgebra is generated by $T_I{}^J$ with $I\neq J$.

In what follows we also need to pick up a distinguished direction in the space of oscillators
$A_I^A$. Without loss of generality we take it along $A_0^A$ so that from now on we consider
variables $A^A_0$ and $A^A_i$, $i=1,..., n-1$ separately.
In particular, we identify $sp(2n-2)\subset sp(2n)$ subalgebra preserving the direction.
We use the following notation for some of $sp(2n-2)$ generators
\be
\label{glnot}
N_i{}^j\equiv T_i{}^j=A_i^A\dl{A_j^A}\,\,\,\;\;\; i\neq j\,,
\qquad
N_i=N_i{}^i\equiv T_i{}^i-\frac{d+1}{2}=A_i^A\dl{A_i^A}\,,
\ee
which form $gl(n-1)$ subalgebra, and
\be
\label{smallTr}
T_{ij} = A_i^A A_j{}_A\;,
\qquad
T^{ij} = \dl{A_i^A}\dl{A_j{}_A}\;,
\ee
that complete above set of elements to
$sp(2n-2)$ algebra.

In what follows we use two different realizations of $sp(2n)$
generators involving  $A_0^A$ and/or $\d/\d A_0^A$:
\begin{itemize}

\item realization on the space of polynomials in $A^A_i$ with coefficients in functions on $\fR^{d+1}$ with the origin excluded. In this case
\begin{equation}
A_0^A = X^A, \qquad \dl{A_0^A}=\dl{X^A}\,,
\end{equation}
where $X^A$ are Cartesian coordinates in  $\fR^{d+1}$. We keep the previous notation \eqref{glnot}, \eqref{smallTr}
for generators that do not involve $X^A$ and/or $\d/\d X^A$ while those that do are
denoted by
\begin{equation}
\label{sp3}
\ba{l}
\dps
\cS^\dagger_i=A_i^A\dl{X^A}\,,\qquad\; {{\bar \cS}}^i=X^A\dl{A_i^A}\,,
\\
\\
\dps
\cS^i=\dl{A^A_i}\dl{X_A}\,,\qquad \Box_X=\dl{X^A}\dl{X_A}\,.
\ea
\end{equation}
It is convenient to split the  $o(d-1,2)$ generators $J^
{AB}$ in two pieces as
$J^{AB} = L^{AB}+M^{AB}$, where an orbital part $L^{AB}$ is given by
\be
\label{orbit}
L^{AB} = X^A\dl{X_B} - X^B\dl{X_A}\;.
\ee

\item realization on the space of polynomials in $A_i^A$ with coefficients
in formal power series in variables $Y^A$ such that
\begin{equation}
\label{r21}
A_0^A = Y^\prime{}^{A} = Y^A+V^A\,, \qquad  \dl{A_0^A}=\dl{Y^A}\,,
\end{equation}
where $V^A$
is some $o(d-1,2)$ vector normalized as
$V^AV_A=-1$. Respective $sp(2n)$ generators are realized by
inhomogeneous differential operators on the space of functions
in $A^A_i$ and $Y^A$.  We use for them the following notation
\begin{gather}
\label{sp1}
\ba{l}
\dps
\sd_i=A_i^A\dl{Y^A}\,,\qquad \bsd^i=(Y^A+V^A)\dl{A_i^A}\,,
\\
\\
\dps
S^i=\dl{A^A_i}\dl{Y_A}\,,\qquad \Box_Y=\dl{Y^A}\dl{Y_A}\,.
\ea
 \end{gather}
Note that this realization is the same as in \cite{\AGT} but with $Y^A$
replaced by $Y^A+V^A$. Shifting by $V^A$ is crucial because this realization is inequivalent
with the usual one (\textit{i.e.}, the one with $V^A=0$). This happens because the change of variables $Y^A\to Y^A+V^A$ is ill-defined in the space of formal power series.
In contrast to the usual realization where highest (lowest) weight conditions
of $sp(2n-2)$ determine finite-dimensional irreducible $o(d-1,2)$-modules,
the inhomogeneous counterpart of these conditions can determine both finite-dimensional irreducible or infinite-dimensional indecomposable $o(d-1,2)$-modules. In particular,
it allows one to describe finite-dimensional gauge modules and infinite-dimensional Weyl modules
associated with gauge fields in AdS at the equal footing. Note that
the case $n=1,2$ has been originally described in \cite{\BGadS}. Analogous representation has been also used in~\cite{Bekaert:2009fg} to describe conformal fields.

The orbital part $L^{AB}$ of the generators $J^{AB}$ takes the form
\begin{equation}
\label{orbitalY}
L^{AB}  = (Y^A+V^A)\dl{Y_B} - (Y^B+V^B)\dl{Y_A} \;.
\end{equation}
This realization of the dual orthogonal and symplectic algebras will be refereed to as twisted Howe dual realization.

\end{itemize}

\subsection{Fields on the hyperboloid in terms of the ambient space}
\label{sec:howedualfields}

We start with the description of the unitary irreducible $o(d-1,2)$-modules
originally developed by Fronsdal \cite{Fronsdal:1979vb} for totally symmetric
fields and then extended to mixed-symmetry fields by Metsaev
\cite{Metsaev:1995re}. However, we need the description in terms
of a slightly different
basis for the irreducibility conditions and in terms of fields defined
on $\fR^{d+1}$ rather than on the hyperboloid.
We show that irreducibility conditions
imposed on fields on $\fR^{d+1}$ within the Metsaev formulation can be
seen as the highest weight conditions for an upper-triangular subalgebra of
$sp(2n)$. This is natural as $o(d-1,2)$ and $sp(2n)$ are dual in this representation in the sense of Howe duality. To make this algebraic interpretation manifest
we reformulate the Metsaev description using the basis elements \eqref{glnot}, \eqref{sp3} and restoring the radial dependence of the fields on the hyperboloid.

For the moment, we restrict our consideration to unitary massless fields \footnote{The case of
non-unitary massless fields as well as partially-massless fields in AdS space is  discussed in Sec. \bref{sec:beyond}.} in AdS space in the explicitly $o(d-1,2)$-invariant way. It is useful to define them as tensor fields on the ambient space $\fR^{d+1}$ with the origin excluded and the radial dependence eliminated through the
appropriate $o(d-1,2)$-invariant constraint.  More technically, tensor fields are represented by functions on $\fR^{d+1}/\{0\}$
taking values in the space of polynomials in variables $A^A_i, \,\,\,\,i=1,..., n-1,\,A=0,..., d$ introduced in Sec. \bref{sec:Preliminaries}.
Such a filed can be viewed  as a function $\phi = \phi(X, A)$. The radial coordinate dependence is effectively eliminated through the homogeneity condition
\begin{equation}
\label{radial}
\Big(N_X-k\Big)\phi=0\;,\qquad N_X = X^A\dl{X^A}\;,
\end{equation}
where $k$ is a number whose explicit value will be fixed later. This allows
to uniquely represent any field defined on hyperboloid in terms of the ambient space field satisfying \eqref{radial}. More explicitly, taking a new coordinate system $(r, x^m)$ in $\fR^{d+1}$, such that $r = \sqrt{-X^2}$ is a radius and $x^m$ are
dilation-invariant coordinates $N_X x^m = 0$, one finds $\phi=\phi_0(x,A)\, r^k$.

In order to describe irreducible representation let us impose the following
irreducibility conditions (in the sector of $A^A_i$-variables):
\begin{equation}
\label{algcond}
 T^{ij}\phi=0\,,\qquad N_i{}^j\phi=0\;\;\;\;\; i<j\,, \qquad (N_i-s_i)\phi=0\,.
\end{equation}
In addition, we also impose the
transversality, divergencelessness, and the "mass-shell" conditions
(in the sector of $X^A$-variables):
\begin{equation}
\label{difcond}
\bar \cS^i{}\phi=0\,,\qquad  \cS^i\phi=0\,,\qquad \Box_X \phi=0\,.
\end{equation}
In contrast to purely algebraic conditions \eqref{algcond} the latter ones
explicitly involve space-time coordinates.

All together, constraints \eqref{radial}, \eqref{algcond}, and \eqref{difcond}
form the upper-triangular subalgebra of $sp(2n)$ algebra supplemented with
Cartan elements.
Because $sp(2n)$ and $o(d-1,2)$ commute these constraints single out
an $o(d-1,2)$-module.\footnote{Note that this module is not necessarily irreducible. In the space of polynomials in $X^A$ these conditions are known to determine a finite-dimensional irreducible $o(d-1,2)$-module. This is not the case for functions on $\fR^{d+1}/\{0\}$ though.}
By solving the homogeneity condition \eqref{radial} and transversality constraints
$\bar \cS^i{}\phi=0$ one finds the description in terms of $o(d-1,1)$-tensor fields
defined on the hyperboloid as was originally observed in the case of totally symmetric
fields ~\cite{Fronsdal:1979vb}.

For fields subjected to the irreducibility conditions \eqref{radial} and \eqref{difcond},
one derives the following wave equation \cite{Metsaev:1995re}
\begin{equation}
 (\Box_{AdS} + m^2)\phi = 0\;, \qquad \Box_{AdS} \equiv \half L_{AB}L^{AB}\;,
\end{equation}
where $L_{AB}$ is an orbital part of $o(d-1,2)$ generators \eqref{orbit}.
Evaluating  $\dps\half L_{AB}L^{AB} =\Box_X -N_X(d-1+N_X)$
on the hyperboloid \eqref{hyper} and substituting  the mass-shell condition \eqref{difcond} one
finds an explicit value of the mass-like term
\be
\label{massa}
m^2 = N_X(d-1+N_X)\;.
\ee
Comparing with the original formula $m^2 = E_0(E_0+1-d)$
derived in \cite{Metsaev:1995re} we see that an eigenvalue
of the $o(d-1,2)$ energy operator $E_0$
and an eigenvalue of $sp(2n)$ Cartan element $N_X$ defined by \eqref{radial} are linearly dependent.

\subsection{Gauge invariance}
\label{sec:reductionMetsaev}

The theory determined by conditions \eqref{algcond}
and \eqref{difcond} does not in general describe irreducible fields.
More precisely, depending on the value of $N_X$ and $N_i$ the space of solutions may contain
singular vectors. In this case we obtain a gauge theory.

It is useful to describe the gauge symmetry using the BRST formalism.
To this end, we introduce  Grassmann odd ghost variables  $b_\alpha\,,\, \alpha= 1,..., p\, \leq\, n-1$. The BRST description comes together with the ghost number grading  $\gh{b_\alpha}=-1$ and  $ \gh{X^A} =\gh{A^A_i}=0$.  The gauge invariance is encoded in the
BRST operator
\begin{equation}
\label{brst-red0}
\brst_p=\cS_\alpha^\dagger\dl{b_\alpha}\,,
\end{equation}
acting in the space of functions $\Psi = \Psi(X,A|\, b)$ in $X^A$ taking values in polynomials in $A_i^A$ and ghost variables $b_\alpha$.

Field $\phi=\phi(X,A)$ considered in the previous section should be identified as the physical field which is the ghost-number-zero component of $\Psi(X,A|\, b)$ while the ghost number $-1$ component
is identified with gauge parameters. The gauge transformation is defined as
\begin{equation}
\label{gauge-t}
\delta \phi=\brst_p\chi\,,\qquad  \gh{\chi}=-1\,,\quad \gh{\phi}=0\,,
\end{equation}
where $\chi=\chi^\alpha(X,A) b_\alpha$ is a gauge parameter.

In order to consistently impose conditions~\eqref{radial}, \eqref{algcond}, and \eqref{difcond} some of them
are to be extended by ghost contributions to make $\brst_p$ act in the subspace. More precisely, in the ghost extended space one
imposes unchanged constraints
\begin{equation}
\label{cond-unmod}
{ \bar \cS}^i\Psi=0\;,
\quad
\cS^i\Psi = 0\;,
\quad
\Box_X\Psi = 0\;,
\quad
T^{ij}\Psi=0\;,
\end{equation}
and modified constraints
\begin{equation}
\label{cond-mod}
\hat\cN_i{}^j\Psi= 0\;\;\;i<j\;,
\qquad
(\hat\cN_i-s_i)\Psi= 0\,,
\qquad
\hat\cJ_\alpha{}^\beta \Psi = 0\;,
\end{equation}
where
\be
\label{DefConstraints}
\hat\cN_i{}^j = N_i{}^j+B_i{}^j\;,
\qquad
\hat\cN_i=N_i+B_i{}^i\,,\qquad
\hat \cJ_\alpha{}^\beta= \hat\cN_\alpha{}^\beta- \delta_\alpha^\beta\, (N_X
-B+ p+1)\;.
\ee
Here and in what follows we use the following useful notations:
\begin{equation}
 B_i{}^j =\delta_i^\alpha\, \delta^j_\beta\, b_\alpha\dl{b_\beta}\;,
\qquad  B_\alpha  = b_\alpha\dl{b_\alpha}\,,
\qquad
B= \sum_\alpha B_\alpha\,.
\end{equation}
Note that for ghost independent elements constraints
\eqref{cond-mod} impose additional restrictions compared to
their counterparts in \eqref{radial} and \eqref{algcond}. Additional constraints $\hat\cJ_\alpha{}^\beta$
appear as the consistency condition following from the commutators of the BRST operator $\brst_p$ with constraints ${\bar \cS}^\alpha$.

Requiring gauge invariance restricts possible values of
weights $s_i$. Their admissible values
are specified by consistency of the second and the third conditions in \eqref{cond-mod} which, in turn, originate from the consistency with the gauge transformation.
Indeed, from the third condition it follows that
$
\cN_\alpha{}^\beta \Psi = 0
$
for $\alpha \neq \beta$.
This implies $(\cN_\alpha- \cN_{\beta})\Psi=0$ and hence
$s_\alpha = s_\beta$ for all $\alpha, \beta$.
In other words, $\Psi$ has vanishing $sl(p)$ weights so that fields are $sl(p)$ singlets.
At the same time, they cannot
be $sl(p+k)$ singlets for $k>0$, therefore  $s_{p}> s_{p+k}$.
For the later convenience, we introduce a notation $s_1\equiv s$
and order the weights as follows
\be
\label{spinsarrangement}
s\equiv s_1=s_2 = ... = s_p>s_{p+1}\geq s_{p+2}\geq \cdots \geq s_{n-1}\;.
\ee

Consistency with the gauge transformation also fixes the value of constant term in the constraint $(N_X-k)\phi=0$ \eqref{radial} because it is now encoded in the constraint $\cJ_\alpha{}^\alpha$.
More precisely, for a physical field $\phi$  one gets
\be
N_X\phi = (s-p-1)\phi\;.
\ee
By virtue of formula \eqref{massa} one explicitly calculates a value of the
mass-like term
\be
\label{massaMetsaeva}
m^2 = (s-p-1)(s-p+d-2)\;,
\ee
thereby recovering the result of \cite{Metsaev:1995re} for unitary massless
AdS fields having the uppermost block of length $s$ and height $p$.

>From a more algebraic point of view, consistency with the gauge transformation \eqref{gauge-t}
extends conditions \eqref{radial}, \eqref{algcond}, and \eqref{difcond} that form
upper-triangular subalgebra of $sp(2n)$ (including Cartan elements)
to an extended set of conditions \eqref{cond-unmod} and \eqref{cond-mod} whose ghost independent parts form the parabolic subalgebra of $sp(2n)$. The consistency can be immediately seen from the fact that together with $\cS^\dagger_\alpha$ these conditions
also form a parabolic subalgebra.

To complete the description of unitary gauge fields in terms of the  ambient space
let us spell out in components the gauge transformation of the physical fields \eqref{gauge-t}.
It takes the following form
 \be
\label{pgauge}
\delta \phi = \cS^\dagger_1
\chi^1 + ... + \cS^\dagger_p \chi^p\;,
\ee
where gauge parameters $\chi^\alpha = \chi^\alpha(X,A)$ are components
of ghost-number $-1$ element  $\chi=\chi^\alpha b_\alpha$ satisfying constraints
\eqref{cond-unmod} and \eqref{cond-mod}.
The peculiar feature of the gauge transformation is that gauge parameters $\chi^\alpha$
do not satisfy Young symmetry conditions and are linearly dependent.
By virtue of constraints \eqref{cond-mod}
one can show that gauge parameters $\chi^\alpha$ at $\alpha<p$
are expressed through  parameter $\chi^p$ satisfying
Young symmetry conditions $N_i{}^j\chi^p = 0$, $i<j$ and weight conditions $N_\alpha \chi^p = (s-\delta_{\alpha p})\chi^p$. With the help of
gauge parameter $\chi^p$
gauge variation \eqref{pgauge} can be equivalently rewritten
in the form
\be
\delta \phi = \Pi \cS^\dagger_p \chi^p  \equiv (\cS^\dagger_p - \cS^\dagger_{p-1}N_p{}^{p-1}-\cS^\dagger_{p-2}N_p{}^{p-2}- \ldots - \cS^\dagger_{1}N_p{}^{1})\chi^p\;,
\ee
where $ \Pi $ involves appropriate Young symmetrizations needed to adjust symmetry properties of both sides \cite{Metsaev:1995re}.

\section{Generating BRST  formulation}

The formulation of the unitary gauge fields developed in the previous section
is not completely satisfactory. First of all, it is not a genuine local gauge field theory because gauge parameters are subjected to the differential constraints (\textit{i.e.}, constraints
involving derivatives with respect to $X^A$-coordinates). Furthermore, the way it is formulated
is not explicitly local because fields are defined in terms of the ambient space.
A natural question is to find a realization of the theory in terms of
internal coordinates on the hyperboloid and gauge parameters not subjected
to differential constraints. This can be done following the procedure used
in \cite{\BGadS} in the case of totally symmetric fields.

The idea suggested from~\cite{\BGadS}~\footnote{In its turn it originates in  (the generalization~\cite{\GL,\BGL,\BGST} to constrained systems of) the Fedosov quantization procedure~\cite{Fedosov:1994} and Vasiliev unfolded formalism~\cite{Vasiliev:1988xc,Vasiliev:1988sa,Vasiliev:2003ev}. In the related context it was also used in~\cite{Grigoriev:2006tt,Bekaert:2009fg}} is to put the ambient space
to the fiber of the vector bundle over AdS space and then eliminate additional degrees of freedom through auxiliary constraints.  More technically,
one replaces coordinates $X^A$ with formal variables $Y^A$ and then consider fields
on the AdS space with values in the fiber that is in the space of ``functions'' in $Y^A$ and $A^A_i$
variables. In this procedure all the algebraic constraints stay the same while
those involving $X^A$ and $\dl{X^A}$ (in particular, those entering the BRST operator)
are replaced with the respective constraints for $Y^A$ variables and hence also become  algebraic. The extra degrees of freedom are then eliminated by introducing
additional constraints.

\subsection{BRST operator and field equations}
\label{sec:interform}

The well-known approach to describe AdS geometry structure on manifold $\manX$
is to consider vector bundle $\cV_0$ over $\manX$ with the fiber being $(d-1,2)$- dimensional pseudo-Euclidean space. The AdS geometry structure is then encoded in the compatible flat $o(d-1,2)$-connection $\omega^{AB}(x)$ and a given section $V^A(x)$ of $\cV_0$ satisfying $\eta_{AB}V^AV^B=-1$, where $\eta_{AB}$ are coefficients of
the fiber-wise pseudo-Eucledean bilinear form. If in addition $\nabla V^A$ seen as a map from the tangent bundle to $\cV_0$ is of maximal rank then indeed $e^A=\nabla V^A$ can be identified with the vielbein and $\eta(e,e)$ with the AdS metric. Here $\nabla$ denotes the covariant derivative determined by connection $\omega$.

The space of polynomials in $A^A_i$ with coefficients in formal power series in $Y^A$ is equipped with the action of  $sp(2n)$ and $o(d-1,2)$ defined by \eqref{glnot}, \eqref{smallTr}, \eqref{sp1} and \eqref{orbitalY}, respectively. Taking this space as a fiber gives a vector bundle $\cV$ associated to $\cV_0$. In what follows, the fibre is also assumed
to contain ghost variables $b_\alpha$  on which $o(d-1,2)$ and $sp(2n)$ act trivially. The $o(d-1,2)$-connection $\omega^{AB}$ determines the following covariant derivative (also denoted by $\nabla$) in the associated bundle $\cV$
\begin{equation}
\label{nabla}
 \nabla=\derham+\half\theta^m\omega^{AB}_{m}J_{AB}\equiv
\theta^m\dl{x^m}-
\theta^m \omega^A_{m B}\Big((Y^B+V^B)\dl{Y^A}+A_i^B\dl{A_i^A}\Big) \,,
\end{equation}
where $\omega^{AB}_{m}$ and $V^A$ are components of $\omega^{AB}(x)$ and $V^A(x)$
introduced using a suitable local frame and $x^m$ are local coordinates on $\manX$.
Here the frame is chosen such that $V^A=const$; the expression for $\nabla$
gets additional terms if a local frame where $V^A\neq const$ is used. We have replaced
basis differential forms $dx^m$ with extra Grassmann odd ghost variables
$\theta^m,$ $\;m=0, ... , d-1$ because  $\nabla$ will be interpreted later as a part of
BRST operator.

Let us consider the following BRST operator
\be
\label{basicBRST}
\hat\brst=\nabla+Q_p\;,\qquad\quad Q_p=\sd_\alpha\dl{b_\alpha}\,,
\ee
defined on the space of sections of the bundle above.
%
We assign the following gradings to the ghost variables
$\gh{\theta^{m}}=-\gh{b_\alpha}=1$
so that BRST operator $\hat\brst$ has a standard ghost-number $\gh{\hat\brst}=1$.
The BRST operator is nilpotent  because of the following
obvious  relations~\footnote{Here and in what follows
the commutator denotes the graded commutator, $\commut{f}{g}=fg-(-)^{|f||g|}gf$,
where $|f|$ is the Grassmann parity of $f$.}
\be
\ba{c}
\label{blockcomrel}
\nabla^2 =Q_p^2 = 0\;,
\qquad
[Q_p, \nabla] =0\;.
\ea
\ee
The former relation holds in virtue of the zero-curvature
condition for connection $\omega^{AB}$. The  latter one is true because
$\nabla$ and $Q_p$ are build of generators of two commuting (Howe dual)
algebras $o(d-1,2)$ and $sp(2n)$.

That BRST operator is build out of the flat $o(d-1,2)$ connection and
$sp(2n)$ generators implies the explicit $o(d-1,2)$-invariance of the theory described by
$\hat\brst$.  To see how $o(d-1,2)$-algebra acts on fields let us note
that $o(d-1,2)$ naturally acts on the fibre at any point $x_0\in\manX$. This determines
an action on fields by taking as parameter a covariantly constant section of
the associated bundle with the fibre being $o(d-1,2)$ considered as the adjoint module.
\footnote{See \cite{\AGT} for a discussion of a general symmetry algebra and the example of Poincar\'e algebra}
 In terms of components, let $\xi^0_{AB}=-\xi^0_{BA}$ represent an $o(d-1,2)$-element.
It can be extended to a covariantly constant $\xi_{AB}(x)$ satisfying
$\nabla \xi_{AB}(x)=0$ and $\xi_{AB}(x_0)=\xi_{AB}^0$, where $x_0 \in \manX$ is a given point
of $\manX$. If $\phi=\phi(x,Y,A,\text{ghosts})$ represents a field then the $o(d-1,2)$-action
can be defined as
\begin{equation}
 R(\xi^0)\phi=
\half\xi_{AB}J^{AB}\phi
\,, \qquad \nabla \xi_{AB}(x)=0\,,\qquad  \xi_{AB}(x_0)=\xi^0_{AB}\,.
\end{equation}
Note that the above expression is not unique since it is defined modulo a
gauge transformation and terms proportional to the equations of motion.
For instance, one can also represent the action such that coordinates $x^m$ are affected
(see \cite{\AGT} for a more extensive discussion).

Let us recall how the BRST operator and its representation space encode
a gauge field theory. Physical fields are identified as
elements $\Psi^{(0)}$ at ghost number $ 0 $, gauge parameters as elements $\chi^{(-1)}$ at ghost number $-1$. The equations of motion and the gauge transformations read as
\be
\hat\brst \Psi^{(0)} = 0\;,\qquad \delta \Psi^{(0)} = \hat\brst \chi^{(-1)}\;,
\ee
where the gauge parameters have ghost number $\gh{\chi^{(-1)}}=-1$.
Elements at other ghost numbers correspond to higher structures of the gauge algebra.
For instance, order $k, ~~k = 1,..., p-1$ reducibility parameters are described by ghost-number $-k$ elements. The respective reducibility identities read as
$\delta \chi^{(-k)} = \hat\brst \chi^{(-k-1)}$.

Specializing to the case at hand: an element of vanishing ghost degree reads as
\begin{equation}
\label{diffforms}
 \Psi^{(0)}=\psi_0+\psi_1+\ldots+\psi_{p}\,,\qquad
\psi_k=\psi_{m_1\ldots\, m_{k}}^{\alpha_1\ldots\, \alpha_{k}}(x,Y,A) b_{\alpha_1}\ldots b_{\alpha_{k}}\theta^{m_{1}}\ldots \theta^{m_{k}}\,.
\end{equation}
The expansion coefficients $ \psi_{m_1\ldots\, m_{k}}^{\alpha_1\ldots\, \alpha_{k}} $ are identified as differential $k$-forms ($k\leq p$) on $\manX$. The equations of motion take the form
\begin{equation}
\label{interm-eq-motion}
\begin{aligned}
 \nabla \psi_0+\sd_\alpha\dl{b_\alpha}\psi_1&=0\,,&\\
 \nabla \psi_1+\sd_\alpha\dl{b_\alpha}\psi_2&=0\,,&\\
&\ldots &\\
\nabla \psi_{p}&=0\,.&
\end{aligned}
\end{equation}
First order gauge parameters can be represented as
\begin{equation}
 \xi^{(-1)}=\xi_1+\xi_2+\ldots+\xi_{p}\,,\quad \xi_k=\xi^{\,\alpha_1\ldots\, \alpha_k}_{\,i_1\ldots\, i_{k-1}}(x,A,Y)b_{\alpha_1}\ldots b_{\alpha_k}\theta^{i_1}\ldots \theta^{i_{k-1}}\;.
\end{equation}
For instance, gauge parameter  $\xi_1=\xi^\alpha b_\alpha$ is a $0$-form. The gauge transformations have the form
\begin{equation}
\label{gs}
\begin{aligned}
 \delta_\xi \psi_0&= \sd_\alpha\dl{b_\alpha}\xi_1\,,&\\
 \delta_\xi \psi_1&= \nabla \xi_1+\sd_\alpha\dl{b_\alpha}\xi_2\,,&\\
&\ldots &\\
 \delta_\xi  \psi_{p}&=\nabla \xi_{p}\,.&
\end{aligned}
\end{equation}
In the same way one can spell out the reducibility relations.

Let us stress that in this formulation the structure of the equations of motion and gauge symmetries is exactly the same as of the formulation~\cite{\AGT} for the Minkowski space fields. The difference
is in $o(d-1,2)$-module structure of the fiber replaced with the $iso(d-1,1)$
(i.e. Poincar\'e) one. Respectively, $o(d-1,2)$ covariant derivative
of the present formulation is replaced with Poincar\'e one. However, the explicit structure of the fibre is quite different for AdS and Poincar\'e gauge fields.
In particular, the algebraic constraints imposed to describe
irreducible fields belong to different algebras.

\subsection{Algebraic constraints}
\label{sec:algconstr}

The system just constructed does not describe an irreducible representation.
Moreover, it is an off-shell system in a sense that it does not impose true differential equations on fields. All equations are equivalent to constraints and can be solved in terms of some unconstrained fields (see~\cite{Sagnotti:2005ns,Vasiliev:2005zu,Grigoriev:2006tt}
for more details on the off-shell form of HS dynamics).
To make it dynamical one should impose the fiber version of the constraints \eqref{cond-unmod} and \eqref{cond-mod}. These read as
\be
\ba{c}
\label{ircon}
\dps
T^{IJ}\Psi=0\;,
\quad
\bsd{}^i\Psi=0\;,\quad
\dps\hat\cN_i{}^j\Psi\equiv(N_i{}^j+B_i{}^j)\Psi=0\;\;\;\;i<j\;,
\\
\\
\dps
\hat\cJ_\alpha{}^\beta\Psi\equiv (\hat\cN_\alpha{}^\beta
-\delta_\alpha^\beta(N_{Y^\prime}
-B+p+1))\Psi=0\;,
\ea
\ee
where we introduced Euler operator $N_{Y^\prime}=\dps(Y^A+V^A)\dl{Y^A}$ (cf.~\eqref{gl(p)}).
In addition, conditions
\be
\label{wcond}
\hat\cN_i\Psi\equiv (N_i+B_i)\Psi=s_i\Psi\,.
\ee
single out a particular spin field.
As before, all the constraints together with $S_\alpha^\dagger$ imposed through the BRST operator form a parabolic subalgebra of $sp(2n)$ represented on the fiber. This ensures the consistency  of the system. Note that among the constraints \eqref{ircon}, \eqref{wcond} those involving $\dl{Y^A}$ (except for $\hat\cJ_\alpha{}^\beta$)
lead to differential equations of motion while the remaining ones give rise to algebraic constraints.

Applying the same reasoning as in Sec. \bref{sec:reductionMetsaev}
one concludes that spins are arranged according to \eqref{spinsarrangement}.
In particular, it follows that constraints $\hat\cJ_\alpha{}^\beta$ split in two parts
\be
\label{irconsplit}
\hat\cN_\alpha{}^\beta\Psi = 0\;,
\qquad
h\Psi\equiv (N_{Y^\prime}
-B+p+1)\Psi=s\Psi\;,
\ee
for $\alpha\neq \beta$ and $\alpha=\beta$, respectively.

\subsection{Equivalence to Metsaev formulation}
\label{sec:equivalence}
Our next aim is to show that the theory determined by BRST operator \eqref{basicBRST} and the constraints \eqref{ircon} and \eqref{wcond} indeed describes unitary gauge fields. To this end
let us note that by eliminating auxiliary fields and fixing the gauge, equations of motion
\eqref{interm-eq-motion}
can be written as
\begin{equation}
\label{covconst}
\nabla \psi_0=0\,,\qquad  \nabla\psi_1+ \ldots=0\,, \qquad \nabla\psi_2+\ldots=0\,,\qquad\ldots\qquad\;,
\end{equation}
where by slight abuse of notation we denote by $\psi_k$ the field of the reduced theory.
More precisely, $\psi_k$ is a $k$-form obtained by eliminating auxiliary components and fixing algebraic gauge
symmetries from the respective fields in \eqref{diffforms} while dots in the equations for $\psi_k$ denote the extra terms depending
on $\psi_l$ with $l<k$.
%
This form of the equations of motion
is known as unfolded form~\cite{Shaynkman:2000ts,Vasiliev:2001wa}. It can be obtained~\cite{\BGST,\BGadS} from BRST formulation \eqref{interm-eq-motion} by reducing to  $Q_p$-cohomology. Here we do need an explicit form of the unfolded equations.
We only note that analysing the gauge invariance of \eqref{covconst} one concludes
that $\psi_0$ is invariant while higher components are determined in terms of $\psi_0$ modulo gauge transformations. It follows that physical degrees of freedom are carried by $\psi_0$ only.~\footnote{This is a general feature of the unfolded form of equations of motion~\cite{Bekaert:2005vh}.}

It is then enough to concentrate on equations for $\psi_0$ that decouple from others.
Field $\psi_0$ can be shown to take values in $Q_p$-cohomology
at vanishing ghost degree. Because the cocycle condition is trivial for ghost-number-zero elements the cohomology class can be identified with the equivalence class of $\psi_0$ from
\eqref{interm-eq-motion} modulo the equivalence relation
$\psi_0 \,\sim \,\psi_0+\sd_\alpha \chi^\alpha$.  Because $\nabla$ is flat there exists
local frame where connection coefficients $\omega$ vanish. In such frame equations for $\psi_0$ takes the form
\begin{equation}
\label{psi0exp}
\nabla\psi_0\equiv  \theta^m(\dl{x^m}-\ddl{V^A(x)}{x^m}\dl{Y^A})\psi_0=0\,,
\end{equation}
where we have reintroduced the term proportional to $dV^A$ that was missing in \eqref{nabla}
(because $V^A$ was assumed constant there). Moreover, in this frame the compensator components $V^A$ satisfying $V^2=-1$ can be identified with the Cartesian coordinates on the ambient space $\fR^{d+1}$ expressed through the intrinsic coordinates on $\manX \subset \fR^{d+1}$.
Note also that in this frame the interpretation of $\nabla$ as $o(d-1,2)$-connection is not straightforward.

On the other hand, let $\phi=\phi(X,A)$ be a field on the ambient space $\fR^{d+1}$ satisfying \eqref{radial}, \eqref{algcond}, \eqref{difcond} and subjected to gauge equivalence \eqref{pgauge} with the gauge parameters $\chi^\alpha$ satisfying \eqref{cond-unmod} and \eqref{cond-mod}. Let us introduce formal variables $Y^A$ and represent $\phi$  and $\chi^\alpha$ by $\psi(X,Y,A)$ and $\lambda^\alpha(X,Y,A)$ satisfying
\begin{equation}
\label{mapXY}
 (\dl{X^A}-\dl{Y^A})\psi=0\,, \quad \psi|_{Y=0}=\phi\,,\quad
(\dl{X^A}-\dl{Y^A})\lambda^\alpha=0\,, \quad \lambda^\alpha|_{Y=0}=\chi^\alpha\,.
\end{equation}
This representation is obviously one-to-one. \footnote{This is true both in the space of smooth functions and formal power series in $Y^A$-variables. As before we assume formal series.}
In view of \eqref{mapXY} one observes that $T \psi$ is equivalent to $\mathfrak T \phi$,
where $T$ and $\mathfrak T$ are two realization of an element of $sp(2n)$. More precisely,
$T$ and  $\mathfrak T$ are related by the change $X^A \leftrightarrow X^A+V^A$ and $\dl{X} \leftrightarrow \dl{Y^A}$
(see Section~\bref{sec:howe}). For instance, $\Box_X\phi$ is equivalent to $\Box_Y \psi$ if \eqref{mapXY} is imposed.

The condition $ (\dl{X^A}-\dl{Y^A})\psi=0$ can be interpreted as a covariant constancy condition $\nabla_0\psi=0$ with respect to an appropriate\footnote{This can be seen as a
standard $iso(d-1,2)$-connection.} connection $\nabla_0$ so that it is similar to \eqref{psi0exp}. Indeed, $\psi$ and $\psi_0$ take values in the same space of polynomials
in $A_i$ with coefficients in formal series in $Y^A$-variables. Although $\psi$ and $\psi_0$ are defined on different spaces ($ \fR^{d+1}$ and $\manX  $, respectively)  it turns out that $\nabla_0\psi=0$ and $\nabla \psi_0=0$
have isomorphic spaces of (equivalence classes modulo gauge invariance) solutions.
To see this let us first introduce a trivial vector bundle $\cV(\fR^{d+1})$ with the fiber
being the space of polynomials in $A_i$ with coefficients in formal series in $Y^A$-variables. It is associated to the tangent bundle over the ambient space $\fR^{d+1}$.

In terms of general coordinates $X^{\underline{A}}$ on $\fR^{d+1}$
the covariant derivative $ \nabla_0 $ takes the form
\cite{Barnich:2006pc} :
\be
\label{MainNabla}
\nabla_0 = \Theta^{\underline{A}}\Big(
\dl{X^{\underline{A}}}-\ddl{X^A}{X^{\ul{A}}}
\dl{Y^A}\Big)\;,
\ee
where new ghost variables $\Theta^{\ul A}$ stand for the basis differentials $dX^{\ul{A}}$. Note that using a general orthogonal local frame of the tangent bundle over $\fR^{d+1}$
would also bring the usual term $\Theta^{\underline{A}} W_{\underline{A}}{}^{AB}(Y_A\dl{Y^B}+A_{iA}\dl{A_i^B})$ with the connection coefficients
 $W_{\underline{A}}{}^{AB}$.
Furthermore, vector bundle $\cV(\manX)$ introduced in Section~\ref{sec:interform}
can be identified as a pullback of the bundle $\cV(\fR^{d+1})$ to $\manX\subset \fR^{d+1}$. Moreover, flat connection $\nabla$ in $\cV(\manX)$ can be seen as
a pullback of $\nabla_0$ in $\cV(\fR^{d+1})$ to $\cV(\manX)$. More explicitly, reducing to the hyperboloid amounts to choosing a new coordinate system
$(r, x^m)$ in $\fR^{d+1}$, where $r = \sqrt{-X^2}$ is a radial coordinate and
$x^m$ are dilation invariant coordinates. Then $\nabla$ can be seen as restriction
of $\nabla_0$ to the surface of fixed radius $r=1$ and radial ghost component $\theta_{(r)}=0$, and is given
by \eqref{nabla} if one identifies dilation-invariant coordinates and the intrinsic coordinates on $\manX$.

Restriction to $\manX$ clearly sends covariantly constant sections of $\cV(\fR^{d+1})$
to those of $\cV(\manX)$. Moreover, this map is an isomorphism. To see this, let us note that this would be a trivial statement if the fiber were finite-dimensional.
Indeed, a covariantly constant (with respect to $\nabla$) section defined at $r=1$ can be extended to a unique covariantly constant (with respect to $\nabla_0$) section defined in the vicinity of $r=1$. In the case at hand, however, the fiber is infinite-dimensional and
solving for $r$-dependence could result in a nonconvergent series. This does note happen because the $r$-dependence of $\phi$ is fixed by constraint $X^A\dl{X^A}\phi=k\phi$ \eqref{radial} which in turn originates from fiber constraint $(Y^A+V^A)\dl{Y^A}\psi=k\psi$.
This shows that restriction to $\manX$ is an isomorphism. In its turn, it determines an isomorphism between fields $\phi(X,A)$ satisfying \eqref{radial}, \eqref{algcond}, and \eqref{difcond} and the covariantly constant sections
$\psi_0(x,A)$ of $\cV(\manX)$ satisfying \eqref{psi0exp}.

This isomorphism is compatible with the $sp(2n)$ actions defined in Section~\ref{sec:howe}.
In particular, this guarantees that this map is compatible with the gauge transformation
so that spaces of respective equivalence classes are also isomorphic. In addition,
it is also compatible with the $o(d-1,2)$ action. This implies that
the value of the energy evaluated in Section \bref{sec:howedualfields}
remains the same.
Moreover, the computation of energy in Section \bref{sec:howedualfields} is only based on
the relations of $o(d-1,2)$ and $sp(2n)$ realized on the space of functions in $X^A,A^A_i$.
Because the relations are the same for realizations of the same algebras on the fiber
one immediately finds the same value for a fiber at a given point of $\manX$. As the equations of motion have the form of a covariant constancy conditions one finds that this value is the same everywhere for a given field configuration.

\subsection{Beyond the unitary case}
\label{sec:beyond}

We have by now constructed a compact gauge-invariant  description
of unitary gauge fields on AdS. It turns out that it can be generalized to a more general
class of fields. To demonstrate the idea of such a generalization let us first show how
the unitary fields can be seen as a subsector of a  wider theory.

Let us consider BRST operator that can be obtained from \eqref{basicBRST}
by taking $p=n-1$
\be
\label{fiberQ}
\brst = \nabla+Q\;,\qquad Q = \sd_i\dl{b_i}\;,\qquad i=1,\ldots, n-1\;,
\ee
acting on the subspace of a space of functions $\Psi=\Psi(x, Y,A|\,\theta, b)$ singled out by constraints
\be
\label{mainconstr}
T^{IJ}\Psi = 0\;,
\qquad
(N_i{}^j+B_i{}^j) \Psi = 0\;\;\;i<j\;,
\qquad
(N_i+B_i)\Psi = s_i\Psi\;.
\ee
Note that these form a subset of constraints \eqref{ircon} and \eqref{wcond}. As we are going to see this theory describes a reducible system so that one or another
irreducible field (not necessarily unitary) can be singled out by imposing further constraints.

For instance, suppose that in addition to \eqref{mainconstr} one impose the following constraints
\begin{equation}
\dl{b_i} \Psi = 0\;,
\qquad
i=p+1, \ldots, n-1\;.
\end{equation}
One then observes that in this subspace $\brst$ coincides with
\eqref{basicBRST} while constraints~\eqref{mainconstr} coincide
with their counterparts from~\eqref{ircon} and \eqref{wcond}.
Imposing the remaining constraints from~\eqref{ircon} and
\eqref{wcond} one indeed recovers the description of unitary
gauge fields presented in Sections~\bref{sec:interform} and
\bref{sec:algconstr}. This shows that unitary fields can indeed be
singled out from a big theory~\eqref{fiberQ} and
\eqref{mainconstr} through farther algebraic conditions.

To give an example of non-unitary fields let us take $n=2$ (totally symmetric fields) so that
$Q= \dps S^\dagger \dl{b}$ and impose the following constraints
\be
\label{pmfConstr}
(\bar S^\dagger)^t \Psi = 0\;,
\qquad
(N_{Y^\prime} -B + t+1)\Psi = s\Psi\,,
\ee
in addition to \eqref{mainconstr}. One can check that $Q$ indeed acts in the subspace.

To see which theory this defines let us evaluate $Q$-cohomology.
In the minimal ghost number the coboundary condition is trivial
so that the cohomology is defined by the cocycle condition
$S^\dagger \Psi_1 = 0$. This, in particular, implies that $\Psi_1$ is a polynomial in $Y^A$
and it is legitimate to re-express it in terms of change $Y^\prime{}^A =Y^A+V^A$.
The full list of conditions determining the cohomology at ghost degree $-1$ reads as
\be
\ba{c}
N\Psi_1 = (s-1)\Psi_1\;,
\quad
N_{Y^\prime} \Psi_1 = (s-t)\Psi_1\;,
\\
\\
(\bar S)^t \Psi_1 = 0\;, \qquad S^\dagger \Psi_1 = 0\;.
\ea
\ee
Being written in terms of variables $A^A$ and $Y^\prime{}^A$ these
give the description of cohomology classes in terms of two-row
$o(d-1,2)$ Young diagrams with the first row of length $s-1$ and
the second row of length $s-t$. These cohomology classes
determine gauge fields that are $1$-form connections with values
in the respective $o(d-1,2)$ module originally considered in
two-row Young diagrams \cite{Vasiliev:2001wa, Skvortsov:2006at}.
These are known to describe partially-massless dynamics of spin $s$ and depth $t$ field~\cite{Deser:2001pe,Deser:2001us,Zinoviev:2001dt,Skvortsov:2006at}.

The cohomology classes at vanishing ghost degree
can be represented by elements satisfying
\be
\mathsf{\bar S} \Psi_0 = 0\;,
\qquad
\mathsf{\bar S} = Y^A\dl{A^A}\;.
\ee
Comparing with constraints~\eqref{pmfConstr} gives the following generalized $V^A$-transversality condition
\be
\label{pmW}
V^{A_1}\cdots V^{A_m}\dl{A^{A_1}}\cdots \dl{A^{A_t}} \Psi_0 =0\;,
\ee
and
\be
(Y^A+V^A)\dl{Y^A}\Psi_0= (s-t-1)\Psi_0\;.
\ee
Along with $\mathsf{\bar S} \Psi_0=0$ this gives the description of the respective Weyl module.
Using the representation \cite{Barnich:2006pc} of the Weyl module
for massless spin-$s$ fields as a subspace of totally traceless elements satisfying
\be
Y^A\dl{A^A}\phi=0\;,\qquad
V^{A}\dl{A^{A}}\phi =0\;,
\qquad
(Y^A+V^A)\dl{Y^A}\phi= (s-2)\phi\;,
\ee
one finds that the partially-massless Weyl module \eqref{pmW} decomposes into a collection of  Weyl modules of massless Fronsdal fields of spins $s-t+1, \ldots, s-1, s$.

In this way we have extended the construction of previous sections to
partially-massless fields originally described in~\cite{Deser:2001pe,Deser:2001us,Zinoviev:2001dt,Skvortsov:2006at}.
In the similar manner, one can describe other irreducible AdS fields.
Indeed, by Howe duality basis elements of  $o(d-1,2)$ commutes with $Q$ and therefore AdS algebra acts in the $Q$-cohomology. The $Q$-cohomology
in non-zero negative ghost numbers $0<p\leq n-1$ has been explicitly calculated in \cite{\AGT}
and is represented by finite-dimensional irreps of $o(d-1,2)$ algebra. These give rise to $p$-form fields with values in the respective $o(d-1,2)$ irreps and coincide with those identified in \cite{Skvortsov:2009zu}. According to \cite{Skvortsov:2009zu} these fields correspond to all possible massless (unitary and non-unitary) fields and partially-massless fields of any symmetry types.

Let us finally comment on the relation of the theory determined by
\eqref{fiberQ} and \eqref{mainconstr} to  massless fields on
$\fR^{d+1}$. It turns out that this theory can be identified as a
pull-back to $\manX$ of a theory defined on $\fR^{d+1}$. This can
be constructed by considering fields on $\fR^{d+1}$ and replacing
$\nabla$ with $\nabla_0$ given by \eqref{MainNabla}. The resulting
theory describes massless mixed-symmetry fields~\footnote{Strictly
speaking one also needs to take $\fR^{d,1}$ rather then
$\fR^{d-1,2}$ in order to have a usual interpretation in terms of
representations of Poincar\`e group} propagating in $\fR^{d+1}$
spacetime \cite{\AGT,\BGST}. Indeed, the BRST operator and the
constraints simply coincide with those from~\cite{\AGT}. Under the
reduction to $\manX$ the massless fields on $\fR^{d+1}$ decomposes
into a collection of gauge fields propagating on $\manX$. As we
have seen on examples one or another irreducible subsystem can be
then singled out by auxiliary constraints compatible with
\eqref{fiberQ} and \eqref{mainconstr}. Let us note that this
ideology is to some extent analogous to that of
~\cite{Boulanger:2008up,Boulanger:2008kw} where the unfolded form
of the equations of motion for mixed-symmetry massless fields on
AdS has been constructed starting from massless fields on the
ambient space.

\section{Parent form and other formulations}
\subsection{Parent form}
Although the formulation constructed in Section~\ref{sec:interform} is rather compact
and transparent other formulations can also be useful. An efficient way to handle various forms of the theory is to start with a sufficiently wide formulation such that other ones can be seen as one or another particular reductions. Such a formulation is refereed to
as a
parent form of the theory and is known for
the case of totally symmetric~\cite{\BGST} and mixed-symmetry~\cite{\AGT} fields on Minkowski space as well as for totally symmetric AdS fields~\cite{\BGadS}.

A parent formulation for mixed-symmetry AdS fields can be constructed as follows:
introduce Grassmann odd ghost variables $c_i$ and $c_0$ associated to the constraints
$\Box_Y$ and $S^i$.  The total BRST operator reads then as
\be
\label{parBRST}
\brst^{parent}  = \nabla + \bar\brst\;,
\ee
where $\nabla$ is given by \eqref{nabla} and $\bar\brst$  is given by
\begin{equation}
\label{brstY}
\bar\brst=Q_p+\text{``more''}=\sd_\alpha\dl{b_\alpha}+c_iS^i+c_0\Box_Y-c_\alpha\dl{b_\alpha}\dl{c_0}\,.
\end{equation}
The representation space is that of formulation of Section~\bref{sec:interform}
extended by polynomials in new ghost variables $c_i,c_0$ and satisfying the following
constraints
\be
\label{const-parent}
\ba{c}
\mathbb{\bar S}^i\Psi=0\;,
\quad
\cT^{ij}\Psi=0\;,
\quad
\cN_i{}^j\Psi= 0\;\;\;i<j\;,
\quad
\cJ_\alpha{}^\beta \Psi = 0\;,
\quad
\cN_i\Psi=s_i\Psi\,,
\ea
\ee
where
\be
\label{1set}
\dps
\mathbb{\bar S}^i = {\bar S}^i +2C_0{}^i\;,
\quad
\dps
\cT^{ij} = T^{ij}+ G^{ij}\;,
\quad\dps
\cN_i{}^j = \hat \cN_i{}^j+C_i{}^j\;,
\ee
and
\be
\label{gl(p)}
\cJ_\alpha{}^\beta=
\hat \cJ_\alpha{}^\beta+\delta_\alpha^\beta(2 C_0 - C)\;.
\ee
Here in addition to $B$ and $B_i^j$ introduced above we have also used the
following useful notation for operators involving new ghost variables
\be
C_I{}^J = c_I\dl{c_J}\;,
\qquad
G^{ij} = \delta^i_\alpha\dl{c_j}\dl{b_\alpha}+\delta^j_\alpha\dl{c_i}\dl{b_\alpha}\;,
\ee
along with the respective Euler operators
\be
\dps
C_I = c_I\dl{c_I}\quad \text{(no summation)}\,,
\qquad
C = \sum_i C_i\,.
\ee

To see that this formulation is equivalent to the one of
Section~\bref{sec:interform} one introduces additional degree such
that $\deg{c_o}=\deg{c_i}=-1$ and reduces the theory to the
cohomology of the term $\brst^{parent}_{-1}=c_0\Box_Y+c_i S^i$
from \eqref{brstY} which carries lowest degree. In its turn, $\brst^{parent}_{-1}$ -cohomology is
concentrated in vanishing degree, and hence the reduced theory
coincides with the one of Section~\bref{sec:interform}.\footnote{See
\cite{\BGST,\BGadS} for more details on the equivalent reductions
in  cohomological terms}

\medskip

The constraints \eqref{const-parent} still contain those involving
$Y^A$ and $\dl{Y^A}$. These are ghost modified ${\bar S}^i$ and $h$ (recall that
$\cJ_\alpha{}^\beta$ can be split into $\cN_\alpha{}^\beta$ and $h$, cf. \eqref{irconsplit}).
It can be useful to implement these constraints through the BRST
operator with their own ghost variables so that only purely
algebraic constraints
\begin{equation}
\label{paralgconst}
\cT^{ij}\Psi=0,\quad \cN_i{}^j\Psi=0 \,\,\, i<j,\quad \cN_\alpha{}^\beta\Psi=0\,\,\,\alpha\neq\beta,\quad
(\cN_i-s_i)\Psi=0
\end{equation}
are directly imposed in the representation space. To show that
such formulation is equivalent to~\eqref{parBRST} one introduces a
degree such that the term involving ${\bar S}^i$ and $h$ is of
degree $-1$ and then reduces to its cohomology. This gives back
the theory~\eqref{parBRST}.~\footnote{This reduction is a
straightforward generalization of that from~\cite{\BGadS} to which
we refer for more details.}

\subsection{Ambient space parent theory}
\label{sec:ambpar}
The parent theory constructed in the previous section can be seen as a reduction
to the hyperboloid $\manX \subset \fR^{d+1}$ of the related theory defined on
the ambient space $\fR^{d+1}/\{0\}$. Indeed, the arguments analogous to those of Section~\bref{sec:equivalence} show that the theory determined by
\begin{equation}
 \brst^{parent~amb.}=\nabla_0+\bar\brst
\end{equation}
defined on $\fR^{d+1}$ can be reduced to that determined by \eqref{parBRST}. Here, $\nabla_0$
is the covariant derivative defined in~\eqref{MainNabla}. In addition,
as this theory is defined on the entire $\fR^{d+1}/\{0\}$ one also needs to replace
components of $V^A$ with Cartesian coordinates $X^A$ on $\fR^{d+1}/\{0\}$
in the expression of constraints.

Because all the constraints involving $Y^A$ and $\dl{Y^A}$ can be
assumed to be imposed through the BRST operator one can
consistently eliminate variables $Y^A$ and $\Theta^A$. Indeed, using
Cartesian coordinates on $\fR^{d+1}$ and an appropriate degree one
identifies $\Theta^A\dl{Y^A}$ as a lowest degree term in the total
BRST operator. Because $Y^A$ and $\Theta^A$ are unconstrained
variables the cohomology can be identified with
$Y^A,\Theta^A$-independent elements (see \cite{\BGST}, where the analogous reduction was discussed in more details).
Under this reduction all the remaining operators are changed
according to $Y^A+X^A \to X^A$ and $\dl{Y^A} \to \dl{X^A}$ so that
the reduced BRST operator reads as
\begin{equation}
\label{brst}
\brst^{ambient}=c_0\Box_X+c_i\cS^i+\cS_\alpha^\dagger\dl{b_\alpha}-c_\alpha\dl{b_\alpha}\dl{c_0}\;,
\end{equation}
Here we assumed that BRST invariant extensions of constraints
$\bar\cS^i$ and $h_X$ are imposed directly. All the algebraic
constraints \eqref{paralgconst} stay the same.

Let us analyze the resulting ambient space theory in some more details.
Fields $\Psi = \Psi(X,A|\,b,c)$ are convenient to represent
in the form of the decomposition $\Psi = \Psi_1+ c_0 \Psi_2$.
For the ghost-number-zero component $\Psi^{(0)}$ fields
$\Psi_1^{(0)}\equiv \Phi$ and $\Psi_2^{(0)}\equiv C$ are the following decompositions with respect to the ghost variables
\be
\label{sf}
\ba{l}
\dps
\Phi =
\sum_{k=0}^{p}\;c_{i_1} \cdots c_{i_k} b_{\alpha_1} \cdots b_{\alpha_k}\,
\Phi^{i_1...i_k|\alpha_1...\alpha_k}\;,
\\
\\
\dps
C=\sum_{k=0}^{p-1}\;c_{i_1} \cdots c_{i_k}
b_{\alpha_1} \cdots b_{\alpha_{k+1}}\,
C^{i_1...i_k|\alpha_1...\alpha_{k+1}}\;.

\\
\ea
\ee
The expansion coefficients in \eqref{sf} are antisymmetric in each group of indices
 and the slash $|$ implies that no symmetry properties between
two groups
are assumed. In other words, the expansion components take values in
tensor products of $gl(n-1)$ and $gl(p)$ antisymmetric
irreps.
Note that these component fields can be seen
as an AdS version of the generalized triplets  discussed in
\cite{Bengtsson:1986ys,Sagnotti:2003qa,Bonelli:2003kh,\AGT}.

Decomposing the BRST operator with respect to the homogeneity degree in $c_0$
as $\brst^{ambient} = \dps\brst_{1}+\brst_0+\brst_{-1}$ one can reduce
the original theory to
the cohomology of $ \brst_{-1} = \dps c_\alpha\dl{b_\alpha}\dl{c_0}$
(see \cite{\BGST,\AGT} for details). One then concludes that fields $C$
are auxiliary while some components of $\Phi$ are Stueckelberg.
After the reduction one is left with the fields annihilated by
operator $Z_+=\dps c_\alpha\dl{b_\alpha}$ which is naturally interpreted as
a generator of $sl(2)$ realized on ghosts (see \cite{\AGT} for an explicit
discussion of this issue in the similar context).
This reduction provides a relationship between
the AdS version of generalized triplet formulation and the
ambient space metric-like formulation. In particular, one can show that subjecting
the dynamical fields of the reduced theory to
the BRST extended trace conditions
yields the generalized double-tracelessness conditions introduced in \cite{Alkalaev:2003qv}.

\section{$Q_p$ -  cohomology and BMV conjecture}
\label{sec:weyl}

For the sake of completeness, we show here that the constructed generating
formulation reproduces infinite-dimensional Weyl module and finite-dimensional
module of gauge fields of
the unfolded formulation for AdS mixed-symmetry massless fields \cite{Alkalaev:2003qv,Boulanger:2008up,Boulanger:2008kw}. More precisely,
the $Q_p$ - cohomology in the zeroth ghost degree is identified as Weyl module,
while $Q_p$ - cohomology in the minimal ghost number $-p$ is identified
as the gauge module. In all other ghost degrees the cohomology is empty.
Representation of the Weyl module as $Q_p$-cohomology allows to describe it in terms
of Lorentz irreducible fields that become Minkowski space gauge fields in the flat limit. In particular, this gives a proof of the Brink-Metsaev-Vasiliev (BMV) conjecture,
put forward in \cite{Brink:2000ag}, and partially proved in \cite{Boulanger:2008up,Boulanger:2008kw}.

\subsection{Elimination of $(d+1)$-th direction}
So far we used manifestly $o(d-1,2)$ covariant language. In this section
it is convenient to analyse the problem in terms of Lorentz (\textit{i.e.} $o(d-1,1)$) tensor fields.
To this end, we choose the local frame where $V^A = \delta^A_{d}$.
Set $Y^a = y^a$ and  $Y^{d} =z$. Analogously, $A^a_i = a_i^a$ and
$A^d_i = w_i$. In what follows, we always assume that all elements $\Psi = \Psi(Y,A|\,b)$
are totally traceless, $T^{IJ}\Psi = 0$.
The following statement shows how constraints
${\bsd}{}^i$ and $h$ from \eqref{ircon}, \eqref{irconsplit} eliminate the dependence on $(d+1)$-th variables $z$ and $w_i$.

\begin{prop}
\label{prop:isom}
 The space of all totally traceless elements $\Psi = \Psi(Y,A|\,b)$ satisfying
 \begin{equation}
\label{bsdh}
{\bsd }{}^i \Psi=0\,,\qquad (h+m)\Psi=0\,,
 \end{equation}
is isomorphic to the space of all $z,w_i$-independent totally traceless
elements. Here $m$ denotes any integer. The isomorphism sends $\Psi$ to the
traceless component of $\Psi|_{z=w_i=0}$.
\end{prop}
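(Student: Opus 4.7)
The plan is to explicitly invert the two differential constraints \eqref{bsdh} so that $\Psi$ is written in closed form in terms of $\psi:=\Psi|_{z=w=0}$, and then to show that the residual tracelessness $T^{IJ}\Psi=0$ descends to an algebraic, strictly triangular system on $\psi$ whose unique solution at prescribed totally traceless part is a bijection onto the target space.

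With the local frame $V^A=\delta^A_d$ the relevant generators take the form
\[
\bsd^i=y^a\dl{a_i^a}+(z+1)\dl{w_i},\qquad h=y^a\dl{y^a}+(z+1)\dl{z}-B+p+1.
\]
The constraint $\bsd^i\Psi=0$ gives the commuting recursion $\dl{w_i}\Psi=-(z+1)^{-1}y^a\dl{a_i^a}\Psi$, which integrates to $\Psi=\exp\bigl(-\tfrac{w_i\,y^a\dl{a_i^a}}{z+1}\bigr)\Psi|_{w=0}$ (the exponential truncates on each $a$-polynomial), while $(h+m)\Psi|_{w=0}=0$ integrates in each $(N_y,B)$-eigensector to $\Psi|_{w=0}=(z+1)^{B-N_y-p-1-m}\psi$, where $N_y=y^a\dl{y^a}$. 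Both formulas are well-defined in $\fC[[z]]\otimes\fC[y,a,w,b]$ and together yield a linear bijection between arbitrary $\psi\in\fC[y,a,b]$ and solutions of \eqref{bsdh} (tracelessness apart).

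Transporting $T^{IJ}\Psi=0$ through this bijection produces ``modified trace'' equations on $\psi$. Iterating the $\bsd^i$-recursion gives $\dl{w_j}\dl{w_i}\Psi=(z+1)^{-2}(y^a\dl{a_i^a})(y^b\dl{a_j^b})\Psi$, so $T^{ij}\Psi=0$ becomes at $z=w=0$
\[
t^{ij}\psi=(y^a\dl{a_i^a})(y^b\dl{a_j^b})\,\psi,
\]
with $t^{ij}$ the Lorentz trace on $a$-indices; analogous computations render $S^i\Psi=0$ and $\Box_Y\Psi=0$ as modified $s^i$- and $\Box_y$-identities of the same shape. The crucial feature is that the right-hand sides strictly raise $N_y$ (by $1$ or $2$) while $t^{ij},s^i,\Box_y$ preserve or lower it. Decomposing $\psi=\sum_{k\geq0}\psi^{(k)}$ by $N_y$-degree, the system is therefore strictly triangular: the trace part of $\psi^{(k)}$ is uniquely prescribed by $\psi^{(k-1)}$ and $\psi^{(k-2)}$, while its traceless part remains free.

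I would conclude by induction on $N_y$. For $k=0,1$ the modified trace equations reduce to literal tracelessness, so $\psi^{(k)}=\tilde\psi^{(k)}$; for $k\geq 2$, the prescribed traces of $\psi^{(k)}$ are solved via the standard harmonic decomposition of $o(d-1,1)$-modules on $\fC[y,a]$, with the traceless part chosen equal to $\tilde\psi^{(k)}$. Injectivity is then automatic, since $\tilde\psi=0$ forces $\psi^{(k)}=0$ at every stage. The main obstacle will be the mutual consistency of the three modified equations at each step—i.e.\ checking that the prescribed right-hand sides always lie in the joint image of $t^{ij},s^i,\Box_y$ on non-harmonic elements. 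This is guaranteed in principle because $\{T^{IJ},\bsd^i,h+m\}$ close into a parabolic subalgebra of $sp(2n)$, so that $T^{IJ}\Psi$ itself propagates by the same reconstruction recursion as $\Psi$; but making this closure explicit degree-by-degree is the core technical step.
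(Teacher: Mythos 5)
Your explicit integration of the two first-order constraints is correct and gives a closed-form version of what the paper constructs order by order: in the frame $V^A=\delta^A_d$ the operators $\bar S^i$ and $h$ are as you write them, the $w_i$-recursion and the $(1+z)^{\nu}$ factor are well defined on polynomials in $A$ with formal-series coefficients in $Y$, and one checks $[h,\,(1+z)^{-1}y^a\partial/\partial a_i^a]=0$ so the two integrations are mutually consistent. The resulting triangular structure in the $N_y$-grading and your injectivity argument are also sound. The paper organizes the proof differently: it packages \eqref{bsdh} into the Koszul-type differential $\delta=\gamma_i\bar S^i+\alpha(h+m)-\alpha\gamma_i\partial/\partial\gamma_i$, grades by homogeneity in $(z,w_i)$ rather than in $y$, and builds the completion of a given traceless $z,w_i$-independent element by homological perturbation theory, with existence and uniqueness reduced to the acyclicity of $\delta_{-1}=\alpha\,\partial/\partial z+\gamma^i\,\partial/\partial w^i$ on the totally traceless subspace.

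The step you leave open, however, is not a routine consistency check --- it is the entire nontrivial content of the proposition, and your closing remark does not close it. The fact that $\{T^{IJ},\bar S^i, h+m\}$ close onto a parabolic subalgebra shows only that $T^{IJ}\Psi$ propagates in $z,w_i$ by the same first-order system as $\Psi$, hence that $T^{IJ}\Psi=0$ everywhere is equivalent to $(T^{IJ}\Psi)|_{z=w=0}=0$, i.e.\ to your modified trace equations on $\psi$. It does \emph{not} show that for a given traceless $\tilde\psi^{(k)}$ there exists a $\psi^{(k)}$ whose traceless part is $\tilde\psi^{(k)}$ and whose first traces equal the prescribed right-hand sides: for that, the data $\bigl((y^a\partial/\partial a_i^a)(y^b\partial/\partial a_j^b)\psi^{(k-2)},\ldots\bigr)$ must satisfy the full tower of integrability conditions of the trace decomposition of $o(d-1,1)$-tensors, a statement about the module structure of $\mathbb{C}[y,a]$ that has to be verified, not assumed. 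This is exactly the point where the paper's argument invokes its one nontrivial input --- that the $\delta_{-1}$-cohomology remains trivial after restriction to the traceless subspace --- and even the paper does not prove this from scratch but imports it from Barnich--Grigoriev. So your route is viable and genuinely different (exact solution of the $z,w_i$ dependence plus grading by $N_y$, versus grading by $(z,w_i)$-degree plus homological perturbation theory), but as written it stops precisely where the real work begins: to finish it you must either prove directly that your prescribed trace data always lie in the image of the trace maps, or recast the existence step homologically as the paper does and prove (or cite) the corresponding acyclicity lemma.
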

The dependence of elements on ghost variables $b_\alpha$ is inessential here and is introduced
for future convenience.
\begin{proof} The proof is a straightforward generalization of that
from~\cite{\BGadS}. The idea is to introduce auxiliary differential
$$
\delta=\gamma_i {\bsd}{}^i+\alpha (h+m)-\alpha\gamma_i\dl{\gamma_i}\,,\qquad \quad \delta^2=0\,,
$$
where $\gamma_i,\alpha$ are auxiliary Grassmann odd ghost variables,
$\gh{\gamma_i} = \gh{\alpha}=1$. For a ghost-number-zero
element $\Psi$, equation $\delta \Psi=0$
is equivalent to equations \eqref{bsdh}.
More formally, such elements can be identified with $\delta$-cohomology
at vanishing ghost number.

The statement amounts to showing that any traceless $z,w_i$-independent
$\Psi(y,a|\,b)$ can be uniquely completed to a totally traceless element
annihilated by $\delta$. If one takes homogeneity in $z,w_i$ as a degree
such a completion can be constructed order by order using the homological perturbation theory. More precisely, decomposing $\delta$ according to the degree
\be
 \delta=\delta_{-1}+\delta_0\;,\qquad \delta_{-1}=\alpha\dl{z}+\gamma^i\dl{w^i}\;,
\ee
one observes that such a completion exists and is unique provided
$\delta_{-1}$-cohomology is trivial (any $z,w_i$-independent elements). This is obviously the case
in the space of all (not necessarily traceless) elements. That this is also the
case in the traceless subspace is a straightforward generalization of the
respective statement proved in~\cite{\BGadS}.
\end{proof}

Both the space of $z,w_i$-independent traceless elements and its isomorphic space
are $sl(n-1)$-modules (in fact, also $gl(n-1)$-modules), with $sl(n-1)$ algebra generated by operators
$N_i{}^j=\dps A^A_i\dl{A^A_j},\,\,\, i\neq j$ and $\smalln_i{}^j=\dps a^a_i\dl{a^a_j},\,\,\, i\neq j$, respectively. The isomorphism above is also an
isomorphism of $sl(n-1)$-modules. Indeed,
\begin{equation}
\cP((N_i{}^j\Psi)|_{z=w=0})=\smalln_i{}^j(\cP(\Psi|_{z=w=0}))\;,
\end{equation}
where $\cP$ denotes the standard projector to a totally traceless component.
That the spaces above are isomorphic as $sl(n-1)$-modules implies,
in particular, that if $m=s$ then the
subspace of \eqref{bsdh} satisfying in addition irreducibility
conditions~\eqref{ircon}  is isomorphic to a subspace of traceless
$z,w_i$-independent elements satisfying the respective constraints
in terms of $\smalln_i{}^j$. One may formulate the above statement as
follows: when reducing to Lorentz all constraints remain intact
while $\bar S^i$ and $h$ are relaxed. In particular, all the weights $s_i$
remain the same.

Furthermore, the action of the BRST operator can be represented in terms of
$z,w_i$-independent elements using the isomorphism of Proposition~\bref{prop:isom}.
It is easy to check that
 \begin{equation}
\cP((Q_p\Psi)|_{z=w=0})=\smallq_p \cP(\Psi|_{z=w=0})
\,,\quad \text{where} \quad
\smallq_p=\smalls^\dagger_\alpha\dl{b_\alpha}\equiv a^a_\alpha\dl{y^a}\dl{b_\alpha}\,.
 \end{equation}
This implies that the field theory determined by
$\brst=\nabla_0+Q_p$ can be completely reformulated in terms of
$z,w_i$-independent fields. In these terms the respective  BRST
operator reads as
\be
\label{Minbrst}
\tilde \brst= \tilde \nabla + \smallq_p\;,
\ee
where $\tilde \nabla$ represents the action of $\nabla$ in terms of
$z,w_i$-independent fields. It acts
in the space of totally traceless functions $\phi = \phi(x, y, a|\,b, \theta)$ subjected
to the following conditions
\be
\label{Minirrcond}
(\smalln_i+B_i)\phi=s_i\,\phi\,,
\quad
(\smalln_i{}^j+B_i{}^j)\phi=0\,\,\, i<j\,,\quad
(\smalln_\alpha{}^\beta+B_\alpha{}^\beta)\phi=0
\ee
where spins are arranged as in \eqref{spinsarrangement}.
Although this form of the theory is not very useful
because the explicit expression of $\tilde\nabla$ and hence the form of the equations of motion is rather involved in terms of $o(d-1,1)$-tensor fields we are going to use it for the analysis of the spectra of unfolded fields. These can be found as $Q_p$-cohomology classes.

In the flat limit $\tilde\nabla$ becomes $\dps \tilde
\nabla\Big|_{\Lambda=0} = \theta^a(\dl{x^a}-\dl{y^a})$, where we made use of
standard flat coordinates $x^a$ and the associated local frame. Remarkably, in this limit the theory describes a
dynamics of a particular collection of Minkowski mixed-symmetry
fields. Indeed, \eqref{Minbrst} coincides with the BRST operator from \cite{\AGT}
describing mixed-symmetry Minkowski fields provided one replaces $\tilde\nabla$
with a usual flat Poincar\'e covariant derivative.
Moreover, for rectangular fields ($p=n-1$) conditions \eqref{Minirrcond}
explicitly coincides with their counterpart from \cite{\AGT} so that
in this case  the flat limit is simply identical with the respective
Minkowski field. More generally, if $p<n-1$ the flat limit
of the theory \eqref{Minbrst} has less gauge invariance (only $s^\dagger_i$ with $i\leq p$
determine gauge symmetry) then its Minkowski space counterpart and hence carries more degrees of freedom.
The fact that in the flat limit an irreducible  AdS gauge field decomposes into
a collection of Minkowski fields is known as BMV conjecture~\cite{Brink:2000ag}.
In Section~\bref{sec:BMV} we give a general
proof of the conjecture for fields of any symmetry type. Note that for fields
with at most four rows a correctness of the BMV conjecture
has been recently established in \cite{Boulanger:2008kw, Boulanger:2008up}.

\subsection{AdS Weyl module}
\label{sec:Weyl}
If one reduces the theory to $Q_p$-cohomology, elements of vanishing ghost number give rise to gauge invariant fields that are zero forms. In the literature the module where
these fields take values is known as Weyl module. In the present context we have the following:
\label{sec:zerothgh}
\begin{definition}
\label{def:Weyl}
An AdS Weyl module $\tilde\modM_0$ of spins $s_1,\ldots, s_{n-1}$ is a ghost number zero $Q_p$-cohomology evaluated in the subspace of elements $\phi(Y,A|b)$ satisfying~\eqref{ircon} and \eqref{wcond}.
\end{definition}
At ghost number zero the cocycle condition is trivial, while the coboundary condition says that any element of the form $\sd_\alpha \chi^\alpha$ is trivial.
As was explained above the $Q_p$-cohomology can be computed as
cohomology of $\smallq_p = \smalls^\dagger_\alpha\dl{b_\alpha}$
in the subspace of $z,w_i$-independent traceless elements satisfying~\eqref{Minirrcond}.

Before the actual analysis of the $q_p$-cohomology let us first
introduce some useful notation and definitions. As only generators
of $sl(n)$ algebra are involved in the constraints and the BRST
operator, it is enough to compute cohomology in the subspace
$\cK^{(k)}$ of traceless homogeneity-$k$ polynomials in
$a^a_i,y^a$ tensored with ghost variables, \textit{i.e.} the
respective eigenspace of the Euler operator $\smalln=\dps
\smalln_y + \sum_i \smalln_i$, $\; n_y = y^a\dl{y^a} $. Indeed,
all $sl(n)$ generators do not change the homogeneity degree.

In its turn, $\cK^{(k)}$ decomposes into a collection of  finite-dimensional irreducible
$sl(n)$-modules. Obviously, the following sets
\be
\label{sl(n-1)}
n_- = \Big\{ \smalln_i{}^j \;\;\; i<j \Big\}
\quad \text{and}\quad
n_+ = \Big\{ \smalln_i{}^j\;\;\; i>j \Big\}\;,
\ee
generate $sl(n-1)\subset gl(n)$ subalgebra and can be identified as
the upper-triangular and the lower-triangular subalgebras of $sl(n-1)$.

In order to realize the AdS Weyl module in terms of representatives of the equivalence classes it is
useful to restrict the analysis to a finite-dimensional irreducible $sl(n)$-module $V\subset \cK^{(k)}$.
In particular, module $V$ is completely specified by the eigenvalues
$m_y(\psi_0),m_i(\psi_0)$ of its highest weight (HW) vector $\psi_0$
with respect to the Euler operators $n_y,n_i$.

Conditions $\smalln_{i}{}^{j}\phi=0$ for $i<j$ imposed on $\phi$
are in fact the HW conditions with respect  to $sl(n-1)$ subalgebra. The space of $n_-$-invariant elements can be then seen as a subspace $V_0 \subset V$ of $sl(n-1)$ HW vectors.
Decomposing $V$ into the irreducible $sl(n-1)$-submodules
as
\begin{equation}
 V=\bigoplus_i V_i\;,
\end{equation}
and using the natural projection to the $sl(n-1)$ HW subspace of any irreducible
$sl(n)$-module, one defines the projector $\Pi : V\to V$ such that $\Pi^2=\Pi$ and $\im \Pi$
is the $n_-$-invariant subspace.

We have the following two Lemmas. Integers $m_i(\phi)$ below are eigenvalues
of the Euler operators $\smalln_i$ acting on $\phi$.

\begin{lemma}
\label{lemma:1}
Let $\phi$ be an $sl(n-1)$ HW vector from $V_i\subset V$ then $\phi$ can be represented as
\begin{equation}
\label{repSL}
 \phi=\Pi\smalls^\dagger_{i_1}\ldots\smalls^\dagger_{i_l} \Lambda^{i_1\ldots i_l} \psi_0\,,
\end{equation}
where $\psi_0$ is a HW vector of irreducible $sl(n)$-module $V$ and $\Lambda^{i_1\ldots i_l}$ are some coefficients.
\end{lemma}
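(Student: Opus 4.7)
The plan is to combine irreducibility of $V$ as an $sl(n)$-module with PBW applied to the lowering subalgebra $n_-^{sl(n)}$. I first split $n_-^{sl(n)}$ as a vector space into the $sl(n-1)$-lowerings $\smalln_j{}^i$ ($i>j$) and the operators $\smalls^\dagger_i=a^a_i\dl{y^a}$ which carry the distinguished $y$-direction into the $i$-th direction. Since $[\smalls^\dagger_i,\smalls^\dagger_j]=0$ and the mixed commutator $[\smalln_j{}^i,\smalls^\dagger_k]=\delta_{ik}\smalls^\dagger_j$ again lies in the span of the $\smalls^\dagger$'s, PBW in the ordering ``$sl(n-1)$-lowerings first, $\smalls^\dagger$'s last'' yields a vector-space basis $\{L_I\,S_J\}$ of $U(n_-^{sl(n)})$, with each $S_J$ a symmetric monomial in the $\smalls^\dagger$'s.

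Next, irreducibility gives $V=U(n_-^{sl(n)})\psi_0$, so the $sl(n-1)$-HW vector $\phi$ admits a finite expansion
\begin{equation*}
\phi=\sum_{I,J} a_{IJ}\, L_I\, S_J\, \psi_0.
\end{equation*}
Applying the projector $\Pi$ and using $\Pi\phi=\phi$ produces
\begin{equation*}
\phi=\sum_{I,J} a_{IJ}\,\Pi\bigl(L_I\, S_J\, \psi_0\bigr).
\end{equation*}
The crucial observation is that $\Pi\circ L_I=0$ on all of $V$ whenever $L_I$ is a nontrivial monomial of $sl(n-1)$-lowerings: in each finite-dimensional irreducible $sl(n-1)$-constituent $V_k\subset V$ the HW subspace $V_k^{HW}$ has strictly maximal $sl(n-1)$-weight, so any nontrivial lowering strictly decreases some weight and maps $V_k$ into the $\Pi$-annihilated complement of $V_k^{HW}$. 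Hence only the $L_I=\id$ terms survive, and re-collecting the remaining coefficients into totally symmetric tensors $\Lambda^{i_1\ldots i_l}$, one for each degree $l$, yields
\begin{equation*}
\phi=\Pi\bigl(\Lambda^{i_1\ldots i_l}\,\smalls^\dagger_{i_1}\cdots\smalls^\dagger_{i_l}\,\psi_0\bigr),
\end{equation*}
as claimed.

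The only point that requires some care is the compatibility of the chosen PBW ordering with the $(n_-^{sl(n-1)},\smalls^\dagger)$-split, but this is immediate from the commutators above. The remaining argument is standard weight bookkeeping inside finite-dimensional irreducible representations of $sl(n-1)$, so I do not anticipate any substantial further obstacle.
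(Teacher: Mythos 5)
Your proof is correct and follows essentially the same route as the paper's: write $V=U(\mathfrak{n}_-^{sl(n)})\psi_0$, reorder so the $sl(n-1)$-lowering operators stand to the left of the $\smalls^\dagger_i$'s, and observe that $\Pi$ annihilates the image of any nontrivial $sl(n-1)$-lowering monomial. The only difference is that you make the PBW ordering and the relevant commutators explicit, which the paper leaves implicit.
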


\begin{lemma}
\label{lemma:2}
Let $\phi$ be an $sl(n-1)$ HW vector from $V_i\subset V$ then the conditions
$m_\alpha(\phi)=m_{\alpha}(\psi_0)$ and $\phi\neq \smalls^\dagger_\alpha \chi^\alpha$are equivalent.
\end{lemma}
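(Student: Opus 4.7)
The plan is to combine the explicit representation given by Lemma~\bref{lemma:1} with representation-theoretic analysis on the finite-dimensional irreducible $sl(n)$-module $V$. Both conditions in the lemma respect the Cartan weight grading, so it is enough to treat $\phi$ a simultaneous eigenvector of $\smalln_y,\smalln_1,\ldots,\smalln_{n-1}$. Since the operators $\smalls^\dagger_i$ mutually commute, Lemma~\bref{lemma:1} then forces an essentially unique representation
\[
\phi \;=\; \Lambda\,\Pi\,(\smalls^\dagger_1)^{n_1}\cdots(\smalls^\dagger_{n-1})^{n_{n-1}}\psi_0\,,\qquad n_i\;=\;m_i(\phi)-m_i(\psi_0)\,,
\]
with scalar $\Lambda$ and $n_i\geq 0$ by the standard $sl(n)\downarrow sl(n-1)$ interlacing applied to $\psi_0$. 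The condition $m_\alpha(\phi)=m_\alpha(\psi_0)$ for all $\alpha\leq p$ is therefore equivalent to $n_\alpha=0$ for all such $\alpha$, i.e.\ to the monomial containing no $\smalls^\dagger_\alpha$ with $\alpha\leq p$.

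To translate this combinatorial condition into the statement $\phi\notin\sum_{\alpha\leq p}\smalls^\dagger_\alpha V$ I use the invariant bilinear pairing of $V$ with its dual $V^*$, under which $\smalls^\dagger_\alpha$ is adjoint (up to a sign) to the dual action of $\smalls_\alpha=y^a\dl{a^a_\alpha}$. This gives the duality $\phi\notin\sum_{\alpha\leq p}\smalls^\dagger_\alpha V$ if and only if $\phi$ pairs nontrivially with some element of the kernel intersection $\bigcap_{\alpha\leq p}\Ker\smalls^*_\alpha\subset V^*$. Identifying $V^*$ with a suitable polynomial realisation, this intersection is spanned by the highest-weight vectors of $V^*$ with respect to the $sl(p+1)$-subalgebra of $sl(n)$ generated by $\{\smalls^\dagger_\alpha,\smalls_\alpha,\smalln_y-\smalln_\alpha:\alpha\leq p\}$ together with the $sl(p)$ acting on $a_1,\ldots,a_p$; by the $sl(n)\downarrow sl(p+1)\times sl(n-1-p)$ branching applied to $\psi_0$, these are characterised by Cartan weights matching $\psi_0$ in the $m_\alpha$-positions.

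The main obstacle is producing an explicit pairing witness when the weight condition holds. A natural candidate is the dual of $\zeta_0:=\prod_{i>p}(\smalls^\dagger_i)^{n_i}\psi_0$, which however does not lie exactly in $\bigcap_{\alpha\leq p}\Ker\smalls^*_\alpha$ since $[\smalls_\alpha,\smalls^\dagger_i]=-\smalln_i{}^\alpha\neq 0$ for $i>p$ and $\alpha\leq p$; the candidate must therefore be perturbed inductively by lower-order corrections, supplied by the $sl(p+1)$-extremal-projector construction, into a genuine element $\zeta$ of the intersection. One then checks that the pairing $\langle\phi,\zeta\rangle$ reduces, by commuting raising operators across via $[\smalls_i,\smalls^\dagger_j]=\delta_{ij}\smalln_y-\smalln_j{}^i$ and using $\smalls_i\psi_0=0$, to a strictly positive multiple of $\Lambda\,\|\psi_0\|^2\neq 0$, ultimately resting on the irreducibility of $V$. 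Conversely, whenever the weight condition fails, an explicit $\chi^\alpha$ with $\phi=\smalls^\dagger_\alpha\chi^\alpha$ is provided by factoring a single $\smalls^\dagger_\alpha$ out of the Lemma~\bref{lemma:1} monomial and using $[\smalln_i{}^j,\smalls^\dagger_\alpha]=\delta_{j\alpha}\smalls^\dagger_i$ to commute it through $\Pi$.
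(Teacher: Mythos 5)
Your reduction of the weight condition to the statement that the Lemma~\bref{lemma:1} monomial contains no $\smalls^\dagger_\alpha$ with $\alpha\leq p$ is correct and coincides with the starting point of the paper's proof ($\phi=\Pi(\textsf{s}_1)^l(\textsf{s}_0)^m\psi_0$ with $l=\sum_{\alpha\leq p}n_\alpha$). But both directions of the equivalence are then left at the level of a plan, and in each case the missing step is where the content of the lemma sits. For the direction ``weight fails $\Rightarrow\phi$ is trivial'' you propose to factor a single $\smalls^\dagger_\alpha$ out of the monomial and commute it through $\Pi$. This does not go through as stated: $\Pi$ is a polynomial in the $sl(n-1)$ generators, and the very commutator you quote, $\commut{\smalln_i{}^j}{\smalls^\dagger_\alpha}=\delta^j_\alpha\,\smalls^\dagger_i$, converts the factor being extracted into $\smalls^\dagger_i$ with $i>p$ whenever the projector term contains a generator $\smalln_{i}{}^\alpha$ with $i>p$; such terms are not of the coboundary form $\smalls^\dagger_\beta(\cdots)$ with $\beta\leq p$. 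The paper's proof handles exactly this by a counting argument: in each term of the projector expansion the number of factors $\textsf{s}_1$ exceeds the number of generators $\textsf{n}_{-1}=\smalln_{p+i}{}^\alpha$ that can absorb them by $l$, so for $l>0$ at least one $\textsf{s}_1$ survives on the far left. Your proposal omits this count.

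For the converse direction you switch to a duality argument. The linear-algebra frame is sound (with the positive-definite inner product on polynomials in $y^a,a^a_i$ one has $(\smalls^\dagger_\alpha)^{*}=y^a\dl{a^a_\alpha}$, so $\phi\notin\sum_\alpha\smalls^\dagger_\alpha V$ iff $\phi$ pairs nontrivially with $\bigcap_{\alpha\leq p}\Ker\big(y^a\dl{a^a_\alpha}\big)$), and this is a genuinely different route from the paper, which stays entirely with commutator bookkeeping. But the witness $\zeta$ is never actually produced: you acknowledge that the natural candidate fails to lie in the kernel intersection, appeal to unspecified extremal-projector corrections to fix it, and then assert rather than compute that $\langle\phi,\zeta\rangle$ is a strictly positive multiple of $\Lambda\|\psi_0\|^2$. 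Since the correction terms carry coefficients of varying sign, neither positivity nor nonvanishing of the pairing is automatic, and this is precisely the hard implication of the lemma. As written the argument is incomplete; you would need to either carry out the pairing computation explicitly or fall back on the paper's direct argument, which observes that for $l=0$ the $j=0$ term of the projector expansion contains no $\textsf{s}_1$ at all and runs the counting in reverse.
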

Both Lemmas follow from basic properties of finite-dimensional irreducible $sl(k)$-modules (see Appendix \bref{sec:A}).
Lemma \bref{lemma:2} gives a description of $Q_p$-cohomology at zeroth ghost numbers in terms of
HW vectors of irreducible $sl(n)$-modules.

\subsection{AdS Weyl module in terms of Poincar\'e ones: BMV conjecture}
\label{sec:BMV}
Let us recall that a Poincar\'e Weyl module of spin $l_1\geq l_2,\ldots \geq l_{n-1}$ \cite{Skvortsov:2008vs} can be defined as  a subspace of
$sl(n)$ HW vectors in $\cK$ satisfying the respective weight conditions
(see \cite{\AGT} for more details).  It turns out that the AdS Weyl $\tilde\modM_0$ can be decomposed into the direct sum
of some Poincar\'e Weyl modules.

More precisely, given AdS Weyl module of spin
\eqref{spinsarrangement} a Poincar\'e  Weyl module is called admissible
associated module if $l_i=s_i-\nu_i$, where nonnegative integers $\nu_i = 0$, $ i\leq p$ and $ \nu_i \neq 0 $,  $ i>p $, are
chosen in a way compatible with the Young symmetry.
We have:
\begin{prop}
 AdS Weyl module $\tilde\modM_0$ of a given spin is isomorphic to a direct sum
of the admissible associated Poincar\'e Weyl modules.
\end{prop}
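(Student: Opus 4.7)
The plan is to reduce the computation of $\tilde\modM_0$ to a question in finite-dimensional $sl(n)$ representation theory via Proposition~\bref{prop:isom}, and then apply Lemmas~\bref{lemma:1} and~\bref{lemma:2} to match the surviving cohomology classes with the highest-weight vectors of admissible $sl(n)$-irreducibles.

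First I would use Proposition~\bref{prop:isom} to replace $\tilde\modM_0$ with the $\smallq_p$-cohomology at ghost zero on the space of traceless $z,w_i$-independent fields $\phi(y,a|\,b)$ subject to \eqref{Minirrcond}. At ghost zero the cocycle condition is empty, so what remains is to quotient the subspace of $sl(n-1)$-highest weight vectors of $sl(n-1)$-weight $(s_1,\ldots,s_{n-1})$ by the image of $\smallq_p$ from ghost $-1$. Next, I decompose $\cK$ into finite-dimensional irreducible $sl(n)$-submodules $V$ and analyse the contribution of each $V$ separately, letting $\psi_0\in V$ denote its $sl(n)$-highest weight vector with weights $(m_y,m_1,\ldots,m_{n-1})$.

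By Lemma~\bref{lemma:1} any $sl(n-1)$-highest weight vector $\phi\in V$ can be written as $\Pi\,\smalls^\dagger_{i_1}\cdots \smalls^\dagger_{i_l}\Lambda^{i_1\ldots i_l}\psi_0$. Each $\smalls^\dagger_i$ raises $\smalln_i$ by one and lowers $\smalln_y$ by one, so matching the required $sl(n-1)$-weights forces $m_i(\psi_0)=s_i-\nu_i$, where $\nu_i$ counts the number of $\smalls^\dagger_i$'s applied. Lemma~\bref{lemma:2} then says that $[\phi]$ is a non-trivial cohomology class exactly when $\nu_\alpha=0$ for every $\alpha\leq p$. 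Hence the surviving $sl(n)$-modules have HW weights $(s_1,\ldots,s_p,s_{p+1}-\nu_{p+1},\ldots,s_{n-1}-\nu_{n-1})$ with $\nu_i\geq 0$ for $i>p$, and dominance $s\geq s_{p+1}-\nu_{p+1}\geq\cdots\geq s_{n-1}-\nu_{n-1}\geq 0$ is automatic on the first inequality (because $s>s_{p+1}$) and amounts to the Young-symmetry conditions on $(\nu_i)$ in the remaining ones; these are precisely the admissibility constraints.

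For each admissible $(\nu_i)$ I would then exhibit the linear map
\[
\psi_0\;\longmapsto\;\bigl[\,\Pi\,(\smalls^\dagger_{p+1})^{\nu_{p+1}}\cdots(\smalls^\dagger_{n-1})^{\nu_{n-1}}\psi_0\,\bigr]
\]
from the Poincar\'e Weyl module of spin $(s_1,\ldots,s_p,s_{p+1}-\nu_{p+1},\ldots,s_{n-1}-\nu_{n-1})$ into $\tilde\modM_0$; summing over admissible $(\nu_i)$ yields the candidate map from the direct sum of admissible associated Poincar\'e Weyl modules. Surjectivity is immediate from Lemma~\bref{lemma:1} applied inside each $V$; injectivity follows because distinct admissible tuples label inequivalent $sl(n)$-irreps, which intersect trivially in $\cK$. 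The hard part, I anticipate, is checking that $\Pi\,(\smalls^\dagger_{p+1})^{\nu_{p+1}}\cdots(\smalls^\dagger_{n-1})^{\nu_{n-1}}\psi_0$ is non-zero for every admissible $(\nu_i)$ and every non-zero $\psi_0$ in the corresponding Poincar\'e Weyl module -- in other words, that after applying the lowering operators $\smalls^\dagger_i$ to the dominant vector $\psi_0$, the projector $\Pi$ restoring tracelessness and the $sl(n-1)$-highest-weight condition does not annihilate the result. This reduces to a standard computation with lowering operators acting on the dominant-weight vector of a finite-dimensional $sl(n)$-module, but some care is needed to track the interplay between the Young symmetry of $\psi_0$ and the trace-like components removed by $\Pi$, along the same lines as the proof of Lemma~\bref{lemma:1} sketched in Appendix~\bref{sec:A}.
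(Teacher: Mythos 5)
Your proposal is correct and follows essentially the same route as the paper's proof: the same restriction to $\cK^{(k)}$ and decomposition into irreducible $sl(n)$-modules $V$, the same explicit map $\psi_0\mapsto\Pi\,(\smalls^\dagger_{p+1})^{\nu_{p+1}}\cdots(\smalls^\dagger_{n-1})^{\nu_{n-1}}\psi_0$ (the paper's $\cI_V$), and the same use of Lemmas~\bref{lemma:1} and \bref{lemma:2} to show each admissible $V$ contributes exactly one cohomology class. The non-vanishing issue you flag as the hard part is settled by the multiplicity-one $sl(n)\downarrow sl(n-1)$ branching combined with Lemma~\bref{lemma:1}, which is how the paper implicitly disposes of it.
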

\begin{proof} We prove the statement by constructing the isomorphism explicitly.
Let us first restrict to $\cK^{(k)}$. As usual, we decompose $\cK^{(k)}$ into the direct sum of irreducible $sl(n)$-modules. Let $V$ be a given irreducible component.
Its highest weight vector $\psi_0$ by definition belongs to some Poincar\'e Weyl module.
Two things can happen: either  $\psi_0$ is admissible or not. If not then in $V$ there are no elements from $\tilde\modM_0$. If $\psi_0$ is admissible then there are nonnegative integers $\nu_i$ such that $s_i=l_i+\nu_i$ and $\nu_i=0$ for $i\leq p$.
It then follows from Lemma~\bref{lemma:2} that
\begin{equation}
\phi=\cI_{V}(\psi_0)=\Pi \left[ (\smalls^\dagger_{p+1})^{\nu_{p+1}}\ldots (\smalls^\dagger_{n-1})^{\nu_{n-1}}
\psi_0 \right]
\end{equation}
belongs to $\modM_0$. Note that $\phi$
is the only element in $V$ that belongs to $\tilde\modM_0$. Defining the map
$\cI_{V}$ for each irreducible $V$ (if $V$ is not admissible $\cI_V$ is trivial)
one determines $\cI$ for any element of the Poincar\'e module. By construction,
$\cI_V$ is an isomorphism.
\end{proof}

\subsection{AdS Gauge module}

To complete the description of the spectrum of unfolded fields let us identify the
cohomology at negative ghost degrees. The respective fields
take values in the so-called gauge module. At ghost degree $-p$ the fields are identified as
differential $p  $-forms taking values in the respective $ o(d-1,2) $ modules ~\cite{Alkalaev:2003qv}.
Namely, the coboundary condition
is trivial while the cocycle implies that $\sd_\alpha\phi=0$, where
$ \phi  = \phi_{m_1 \ldots\, m_p}\theta^{m_1}\cdots \theta^{m_p} $ takes values in a
subspace singled out by constraints \eqref{ircon}, \eqref{wcond}. In particular,
the field $\phi$ fulfills the following  conditions:
$\bar S^i \phi = 0 $ for all $i$ and  $(N_\alpha-s+1)\phi=0$ and
$(N_i-s_i)\phi=0$ for $ i>p $.
In view of these conditions representatives can be chosen polynomials in $Y^A$.
In terms of $Y^A{}^\prime=Y^A+V^A$ all the conditions give an explicit characterization
of gauge modules in terms of $o(d-1,2)$ Young tableaux having the uppermost block of length $ s-1 $ and height $ p+1 $~\cite{Alkalaev:2003qv}.
It turns out that $Q_p$-cohomology at ghost numbers other than $0,-p$ vanish.
To see this we again use the representation in terms of $z,w_i$-independent elements.

First of all we note that  constraints $\hat \cN_\alpha{}^\beta\psi = 0$ \eqref{irconsplit}
for $\alpha\neq \beta $  imply that element
$\psi = \psi^{\alpha_1 \ldots \alpha_k}b_{\alpha_1}\ldots b_{\alpha_k}$ with fixed
weights contains just
one independent component $\phi_{(k)}\equiv \psi^{p-k+1\;\ldots \;p-1\;p}$ satisfying
$sl(n-1)$ HW conditions $N_i{}^j \phi_{(k)} = 0$, $ i<j$.
The Young tableau associated to $\phi_{(k)}$ includes the uppermost block
of size $ [s,p-k] $, the neighboring  block of size $ [s-1, k] $, while the rest of the diagram has rows of lengths $ s_i $.

Operator $ \smallq_p$ obviously acts in the space of $sl(n-1)$ HW elements of definite weights. More precisely,  $  \smallq_p:\, \phi_{(k+1)} \mapsto \phi_{(k)} = \Pi \smalls^\dagger_{p-k}\phi_{(k+1)}$, where $\Pi$ is a projector on $sl(n-1)$ HW elements (see Sec. \bref{sec:zerothgh}).

For the $ \smallq_p$-cohomology at ghost number $-k$ we have the following cocycle
and the coboundary conditions:
\begin{equation}
\label{neskoro}
\Pi \smalls^\dagger_{p-k+1}\phi_{(k)} = 0\;,\qquad
%
\phi_{(k)} \sim \phi_{(k)}+ \Pi \smalls^\dagger_{p-k}\chi_{(k+1)}\;,
\end{equation}
where $\chi_{(k+1)}$ are some $sl(n-1)$ HW elements of definite weights.
Note that for $ k=0 $ the cocycle condition is trivial as was already discussed in Section~\bref{sec:Weyl}.
For $ k=p $ the coboundary condition is missing so we are left with
the cocycle condition only.  For  intermediate values of
the ghost number $ 0<k<p $ we have the following Lemmas describing solutions to \eqref{neskoro}.

\begin{lemma}
\label{lemma:3}
Let $\phi_{(k)}$ be an $sl(n-1)$ HW vector from $V_i\subset V$ then
conditions $m_\alpha(\phi_{(k)})=m_{\alpha}(\psi_{0})$ at $ 1\leq \alpha \leq p-k$, and $\phi_{(k)} \neq \Pi \smalls^\dagger_{p-k}\chi_{(k+1)}$ are equivalent.
\end{lemma}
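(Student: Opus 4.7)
My plan is to adapt the proof of Lemma \bref{lemma:2} almost verbatim, with the index bookkeeping adjusted to ghost degree $-k$. The first step is to record the weights of the objects involved: by \eqref{ircon}, \eqref{wcond} together with the spin arrangement \eqref{spinsarrangement}, $\phi_{(k)}$ has $m_\alpha(\phi_{(k)})=s$ for $\alpha\in\{1,\ldots,p-k\}$ and $m_\alpha(\phi_{(k)})=s-1$ for $\alpha\in\{p-k+1,\ldots,p\}$, while an admissible $\chi_{(k+1)}$ carrying the ghosts $b_{p-k}\ldots b_p$ must satisfy $m_{p-k}(\chi_{(k+1)})=m_{p-k}(\phi_{(k)})-1$ and agree with $\phi_{(k)}$ on the remaining $sl(n-1)$-weights. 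This is consistent with $\Pi\smalls^\dagger_{p-k}$ shifting $m_{p-k}$ up by one (and $m_y$ down by one) while commuting with all other Cartan generators.

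For the direction $(\Leftarrow)$, suppose $m_\alpha(\phi_{(k)})=m_\alpha(\psi_0)$ for all $1\le\alpha\le p-k$; then in particular any hypothetical $\chi_{(k+1)}$ would have $m_{p-k}(\chi_{(k+1)})=m_{p-k}(\psi_0)-1$. Applying Lemma \bref{lemma:1} to $\chi_{(k+1)}\in V$ would force the number of $\smalls^\dagger_{p-k}$-factors in its admissible representation $\Pi\smalls^\dagger_{j_1}\cdots\smalls^\dagger_{j_m}\tilde\Lambda^{j_1\ldots j_m}\psi_0$ to equal the negative integer $m_{p-k}(\chi_{(k+1)})-m_{p-k}(\psi_0)=-1$, which is impossible. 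Hence $\phi_{(k)}$ cannot be a coboundary.

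For the converse I would argue by contrapositive, reducing the multi-weight condition to the single weight $m_{p-k}$. Assume $m_{\alpha^*}(\phi_{(k)})>m_{\alpha^*}(\psi_0)$ for some $\alpha^*\le p-k$. Because $m_{\alpha^*}(\phi_{(k)})=s$ and dominance of the $sl(n)$ highest weight yields $m_{\alpha^*}(\psi_0)\ge m_{p-k}(\psi_0)$, one immediately has $m_{p-k}(\phi_{(k)})=s>m_{p-k}(\psi_0)$. Lemma \bref{lemma:1} then represents $\phi_{(k)}$ as $\Pi\,\smalls^\dagger_{i_1}\cdots\smalls^\dagger_{i_l}\Lambda^{i_1\ldots i_l}\psi_0$ with at least one index $i_r=p-k$, and the residual polynomial applied to $\psi_0$ (followed by $\Pi$) produces the desired $\chi_{(k+1)}\in V$ with $\phi_{(k)}=\Pi\smalls^\dagger_{p-k}\chi_{(k+1)}$.

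The main obstacle is exactly the one that Lemma \bref{lemma:2}'s proof confronts: since $\Pi$ does not commute with $\smalls^\dagger_{p-k}$, ``pulling out a factor'' is not literal, and the resulting $\chi_{(k+1)}$ must still be shown to be a genuine $sl(n-1)$-highest-weight vector of the required weights. The resolution is the representation-theoretic fact (established in Appendix \bref{sec:A} for finite-dimensional irreducible $sl(n)$-modules) that any $sl(n-1)$-HW vector in $V$ whose $m_{p-k}$-weight strictly exceeds $m_{p-k}(\psi_0)$ lies in the image of $\Pi\smalls^\dagger_{p-k}$ restricted to the $sl(n-1)$-HW subspace. This is the natural generalization of the step used in Lemma \bref{lemma:2} and it carries over without change, completing the equivalence.
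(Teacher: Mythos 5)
Your proposal is correct and matches the paper's route: the paper's own proof of Lemma \bref{lemma:3} is literally the single line ``analogous to that of Lemma \bref{lemma:2}'', and your adaptation (weight bookkeeping at ghost degree $-k$, reduction to the single weight $m_{p-k}$ via dominance of $\psi_0$, the counting argument that rules out a coboundary when the weights agree, and deferral of the $\Pi$-versus-$\smalls^\dagger$ noncommutativity to the commutator manipulation of Appendix \bref{sec:A}) is exactly the intended generalization. If anything, you supply more detail than the paper does.
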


\begin{lemma}
\label{lemma:4}
Let $\phi_{(k)}$ be an $sl(n-1)$ HW vector from $V_i\subset V$ then
$\Pi \smalls^\dagger_{p-k+1}\phi_{(k)} = 0$ iff $m_\alpha(\phi_{(k)})>m_{\alpha}(\psi_{0})$
for some $\alpha$ such that $ 1\leq \alpha \leq p-k$.

\end{lemma}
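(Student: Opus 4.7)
My plan is to recast the vanishing of $\Pi\smalls^\dagger_{p-k+1}\phi_{(k)}$ as a Gelfand-Tsetlin branching statement for $sl(n)\supset sl(n-1)$ on the ambient $sl(n)$-irrep $V$, and then to translate that to the disjunction of the lemma by using the spin arrangement~\eqref{spinsarrangement}. By Lemma~\bref{lemma:1} the vector $\phi_{(k)}$ is, up to a nonzero scalar, $\Pi\prod_\alpha(\smalls^\dagger_\alpha)^{\nu_\alpha}\psi_0$ with $\nu_\alpha=m_\alpha(\phi_{(k)})-m_\alpha(\psi_0)\ge 0$, and any $sl(n-1)$-HW vector $\Psi$ of the shifted weight $(m_1(\phi_{(k)}),\ldots,m_{p-k+1}(\phi_{(k)})+1,\ldots,m_{n-1}(\phi_{(k)}))$ is the analogous expression with $\nu_{p-k+1}$ replaced by $\nu_{p-k+1}+1$. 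Multiplicity-freeness of the branching puts $\Pi\smalls^\dagger_{p-k+1}\phi_{(k)}$ and $\Psi$ in a common subspace of dimension at most one, so the lemma reduces to comparing the proportionality coefficient between them with zero.

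For the $(\Leftarrow)$ direction, I assume some $\alpha\le p-k$ satisfies $m_\alpha(\phi_{(k)})>m_\alpha(\psi_0)$. The HW ordering $m_1(\psi_0)\ge\cdots\ge m_{p-k}(\psi_0)$ combined with the branching bounds $m_\beta(\phi_{(k)})\ge m_\beta(\psi_0)\ge m_{\beta+1}(\phi_{(k)})$ and the spin values $m_\beta(\phi_{(k)})=s$ for $\beta\le p-k$ and $m_{p-k+1}(\phi_{(k)})=s-1$ force $m_{p-k}(\psi_0)=s-1$. The shifted weight then violates the interlacing bound $m_{p-k}(\psi_0)\ge m_{p-k+1}(\phi_{(k)})+1$, so its $sl(n-1)$-HW subspace in $V$ is zero-dimensional and $\Pi\smalls^\dagger_{p-k+1}\phi_{(k)}$ must vanish by pigeonhole. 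For the $(\Rightarrow)$ direction, assuming the disjunction fails, the same bounds give $m_{p-k}(\psi_0)=s$, every interlacing is satisfied, and $\Psi\ne 0$ by Lemma~\bref{lemma:1}. Using $[\smalls^\dagger_{p-k+1},\smalls^\dagger_\beta]=0$, the element $\Pi\smalls^\dagger_{p-k+1}\prod_\alpha(\smalls^\dagger_\alpha)^{\nu_\alpha}\psi_0$ is a nonzero scalar multiple of $\Psi$, so it remains only to check that $\Pi\smalls^\dagger_{p-k+1}\phi_{(k)}=\Pi\smalls^\dagger_{p-k+1}\Pi\prod_\alpha(\smalls^\dagger_\alpha)^{\nu_\alpha}\psi_0$ likewise acquires a nonzero coefficient of $\Psi$.

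The main obstacle is precisely this last nonvanishing step. Since $[\Pi,\smalls^\dagger_{p-k+1}]\ne 0$, one cannot slide $\smalls^\dagger_{p-k+1}$ through $\Pi$ for free: the discrepancy $\Pi\smalls^\dagger_{p-k+1}(1-\Pi)\prod_\alpha(\smalls^\dagger_\alpha)^{\nu_\alpha}\psi_0$ receives contributions from the non-HW part $\ker\Pi$ that are not obviously zero, and it must be verified that they do not cancel the main term. The cleanest resolution, in parallel with the Appendix~\bref{sec:A} proofs of Lemmas~\bref{lemma:1} and~\bref{lemma:2}, is to switch to a Gelfand-Tsetlin (or Young-symmetrizer) basis of $V$ in which the matrix element of $\smalls^\dagger_{p-k+1}$ between the canonical $sl(n-1)$-HW representatives of $\phi_{(k)}$ and $\Psi$ factorizes into a product of elementary rational factors determined entirely by the interlacing data, whose only possible zero is the degenerate case $m_{p-k}(\psi_0)=m_{p-k+1}(\phi_{(k)})$---exactly the case already excluded by the hypothesis $\Psi\ne 0$.
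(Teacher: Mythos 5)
Your backward direction is fine: multiplicity-free $sl(n)\downarrow sl(n-1)$ branching plus the interlacing inequalities correctly show that when some $\alpha\leq p-k$ has $m_\alpha(\phi_{(k)})>m_\alpha(\psi_0)$ (which forces $\alpha=p-k$ and $m_{p-k}(\psi_0)=s-1$), the shifted weight admits no $sl(n-1)$ HW vector in $V$, so the projection vanishes. The genuine gap is the forward direction, and it sits exactly where you flagged it: when $m_\alpha(\phi_{(k)})=m_\alpha(\psi_0)$ for all $\alpha\leq p-k$ you must show that $\Pi\smalls^\dagger_{p-k+1}\phi_{(k)}$ is a \emph{nonzero} multiple of $\Psi$, i.e.\ that the cross term $\Pi\smalls^\dagger_{p-k+1}(1-\Pi)\prod_\alpha(\smalls^\dagger_\alpha)^{\nu_\alpha}\psi_0$ does not cancel the leading contribution. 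Your proposed fix --- that in a Gelfand--Tsetlin basis the matrix element of $\smalls^\dagger_{p-k+1}$ between the two canonical HW representatives factorizes into elementary factors whose only zero is the already-excluded degenerate interlacing --- is an assertion, not an argument. It is also not a standard quotable fact: $\smalls^\dagger_{p-k+1}$ is the root vector attached to the pair of indices $(p-k+1,y)$, which for generic $k$ is not a simple root vector of $sl(n)$, so its matrix elements between Gelfand--Tsetlin vectors are not given by a single product formula, and establishing the claimed factorization together with the location of its zeros is essentially the content of the lemma rather than a shortcut to it. As written, half of the ``iff'' is unproved.

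The paper closes this direction by a different, more pedestrian route, in the spirit of its proofs of Lemmas \bref{lemma:1} and \bref{lemma:2}: expand the projector, push all $\smalls^\dagger_\alpha$ to the left using commutation relations of the type $\commut{\textsf{n}_{-1}}{\textsf{s}_1}=\textsf{s}_0$, and invoke two elementary facts --- that $sl(n-1)$ HW elements whose weights violate dominance vanish identically, and that $\Pi\smalls^\dagger_i\smalls^\dagger_{i+1}\phi=0$ whenever $m_i(\phi)=m_{i+1}(\phi)$. Your interlacing analysis cleanly reproduces the vanishing half and is arguably more transparent there, but if you keep the branching-rule framing you still owe an honest proof of the nonvanishing of the single surviving matrix element; the commutator bookkeeping of the Appendix is one way to supply it, and it is not subsumed by the branching argument you have given.
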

Here $\psi_0$ denote respective $sl(n)$ HW vectors from Lemma \bref{lemma:1}.
Both Lemmas result from comparing admissible weights of $ sl(n-1) $ HW elements
$\phi_{(k)}$ and their associated $ sl(n) $ HW elements $ \psi_{0} $ (see Appendix \bref{sec:A}).
Since there are no $ sl(n-1) $ HW elements that simultaneously satisfy both the cocycle and the coboundary conditions,  one concludes that the cohomology is empty for $ k\neq 0,p$.

\section{Conclusions}
\label{sec:conclusion}

In this paper, we have proposed the unified  formulation for unitary dynamics
of free bosonic HS fields of any symmetry type in the AdS space.
We have also observed and discussed how to generalize the theory to include
non-unitary fields. In particular, we have explicitly described such a generalization
for totally symmetric partially-massless fields.
The theory is formulated on the level of equations of motion using the usual BRST first quantized language. This makes the formulation somewhat analogous to the usual string-inspired BRST approach to higher spin fields. In particular,
this can make the proposed formulation useful in describing relation to (a tensionless limit of) the bosonic string theory on the AdS background.

Another motivation and possible application of these results have to do with
studying consistent interactions for mixed-symmetry AdS fields.
While in the case of totally symmetric fields
consistent interactions are known to cubic order in the Lagrangian
formulation~\cite{Fradkin:1987ks, Vasiliev:2001wa, Alkalaev:2002rq,Metsaev:2006zy,Buchbinder:2006eq,Zinoviev:2008ck,Boulanger:2008tg,Manvelyan:2009tf} and to all orders
at the level of equations of motion~\cite{Vasiliev:1992av,Vasiliev:2003ev},
interactions of mixed-symmetry AdS gauge fields are not known so far.
We hope that the
transparent algebraic structure and a due control of the gauge invariance through the
BRST technique make the present formulation useful in searching for nonlinear theory.
Moreover, a possible nonlinear deformation is necessarily related to the appropriate
algebraic structure -- higher spin algebra. In the case of totally symmetric fields
the respective algebra~\cite{Vasiliev:2003ev,Vasiliev:2004cm} can be identified with
higher symmetries~\cite{Eastwood:2002su} of the scalar singleton, the
corresponding  algebra in the mixed-symmetry case is expected to be related to singletons of
nonvanishing spins.
The respective candidate higher spin algebras have been recently identified in~\cite{Bekaert:2009fg} using a framework closely related to
the present one (see also a discussion of singleton composites in ~\cite{Boulanger:2008kw}).

\vspace{5mm}

\textbf{Acknowledgments.}
We are grateful to I. Tipunin for many fruitful discussions. We also acknowledge discussions
with G. Barnich, N. Boulanger, E. Feigin, C. Iazeolla, R. Metsaev,
O. Shaynkman, E. Skvortsov,  P. Sundell and M. Vasiliev.
This work is supported by the LSS grant Nr 1615.2008.2.
The work of  KA is supported in part by grants
RFBR grant Nr 08-02-00963 and the Alexander von Humboldt Foundation grant PHYS0167.
The work of MG is supported by the RFBR grant 08-01-00737 and RFBR-CNRS grant 09-01-9310.

\appendix
\section{Proofs of Lemmas of Section \bref{sec:weyl}  }
\label{sec:A}

\paragraph{Proof of Lemma \bref{lemma:1}.}

Any element from $V\subset \cK^{(k)}$ can be represented as a linear combination of elements
obtained by acting on $\psi_0$ with $n_+$ and $\smalls^\dagger_i$.
Representing $\phi$ in this way, moving $n_+$ to the left by using the algebra commutation
relations,
and applying $\Pi$ one finds that $\phi=\Pi\phi=\Pi \smalls^\dagger_{i_1}\ldots\smalls^\dagger_{i_l} \Lambda^{i_1\ldots i_l} \psi_0$ because all the terms involving $n_+$ can not contribute.
Indeed, $\Pi n_+\chi=0$ for any $\chi$ because $n_+$ can not map to HW subspace.

\paragraph{Proof of Lemma \bref{lemma:2}.}

 Let us first show that $m_\alpha(\phi)\neq m_{\alpha}(\psi_0)$ iff $\phi$ is trivial
in the sense that $\phi=\smalls^\dagger_\alpha \chi^\alpha$ for some $\chi^\alpha$. To this end
introduce the following notation: $\textsf{n}_{-1}$ denotes an element from the subalgebra
$n_+$ of the form $N_{p+i}^\alpha$, where $p+i$ denote indices running $p+1, ..., n-1$,
$\textsf{n}_0$ either $N_\alpha^\beta$ or $N_{p+i}^{p+j}$ from the subalgebra $n_+$;
$\textsf{s}_0$ denotes $\smalls^\dagger_{p+i}$ and $\textsf{s}_1$ denotes $\smalls^\dagger_\alpha$.
Note that commutation relations have the structure
\begin{equation}
\begin{gathered}
 \commut{\textsf{n}_{-1}}{\textsf{n}_{-1}}=0\,,\quad \commut{\textsf{n}_{-1}}{\textsf{n}_0}=\textsf{n}_{-1}\,,\quad \commut{\textsf{n}_0}{\textsf{n}_0}=\textsf{n}_0\,,\\
\commut{\textsf{n}_0}{\textsf{s}_0}=\textsf{s}_0\,, \quad \commut{\textsf{n}_0}{\textsf{s}_1}=\textsf{s}_1\,,\quad
\commut{\textsf{n}_{-1}}{\textsf{s}_1}=\textsf{s}_0\,, \quad \commut{\textsf{n}_{-1}}{\textsf{s}_0}=0
\end{gathered}
\end{equation}
According to Lemma \bref{lemma:1} a given HW vector can be represented as
$\phi=\Pi\, (\textsf{s}_1)^l(\textsf{s}_0)^{m}\psi_0$ for some nonnegative integers $l,m$. The terms originating from the projector
have the following structure
\begin{equation}
\label{dec}
(\textsf{n}_0)^i (\textsf{n}_{-1})^j (\textsf{s}_1)^{l+j}(\textsf{s}_0)^{m-j}\psi_0
\end{equation}
where the weights $m_y$ and $m_i$ of $\phi$ have been taken into account. Then using the
commutation relations above one moves all $\textsf{s}_1$ to the left.
This results in the expression of the form $\textsf{s}_1(...)$ iff $l>0$. Indeed,
the terms without $\textsf{s}_1$ can arise in this process only if $l=0$
(indeed only commuting $\textsf{n}_{-1}$ with $\textsf{s}_{1}$ one can get rid of
$\textsf{s}_1$; but the power of $\textsf{s}_1$ is higher than that of $\textsf{n}_{-1}$
unless $l=0$). If $l=0$ then analogous arguments show that $\phi$ is nontrivial
$\phi \neq \textsf{s}_1(...)$
and other way around.

\paragraph{Proof of Lemma \bref{lemma:3}.}

It is analogous to that of Lemma \bref{lemma:2}.

In summary, both Lemma \bref{lemma:2} and its generalization Lemma
\bref{lemma:3} mean that nontrivial $ sl(n-1) $ HW elements
representing the equivalence relation  cannot be generated from
the respective $ sl(n) $ HW elements by the first $ p-k $
generators $ \smalls^\dagger_\alpha $ (for $ k=0 $ we recover
Lemma \bref{lemma:2}).

\paragraph{Proof of Lemma \bref{lemma:4}.}

The proof reduces to the following two observations. Firstly, one observes that acting by
$\smalls^\dagger_{i}$ increases a value of weight $s_{i}$
by one and recalls that $sl(n-1)$ HW elements with weights $s_j < s_{j+1}$ vanish identically.
Secondly, given $sl(n-1)$ HW element $\phi$
it is easily seen that the relation $\Pi \smalls^\dagger_{i}\smalls^\dagger_{i+1}\phi = 0$
holds provided that at least two  subsequent weights are equal, \textit{i.e.,} $m_i(\phi) = m_{i+1}(\phi)$.


\providecommand{\href}[2]{#2}\begingroup\raggedright\endgroup
\end{document}